\setlist{%
  align=left,%
  labelsep=*,%
  leftmargin=*,%
  topsep=1mm,%
  itemsep=0mm%
}
\newcommand*{\mysquare}{\rule[0.18em]{0.36em}{0.36em}}
\newcommand*{\mytriangle}{\raisebox{0.12em}{\resizebox{0.48em}{0.48em}{$\blacktriangleright$}}}
\newcommand*{\mybar}{\rule[0.32em]{0.62em}{0.08em}}
\newcommand*{\mydot}{\raisebox{0.14em}{\resizebox{0.44em}{!}{$\bullet$}}}
\setlist[itemize,1]{label={\mysquare\ }}%
\setlist[itemize,2]{label={\mytriangle\ }}%
\setlist[itemize,3]{label={\mybar\ }}%
\setlist[itemize,4]{label={\mydot\ }}%
\setlist[enumerate,1]{label=\arabic*)}%
\setlist[enumerate,2]{label=\arabic{enumi}.\arabic*)}%
\setlist[enumerate,3]{label=\arabic{enumi}.\arabic{enumii}.\arabic*)}%
\newcommand\myisodate{\number\year-\ifcase\month\or 01\or 02\or 03\or 04\or 05\or 06\or 07\or 08\or 09\or 10\or 11\or 12\fi-\ifcase\day\or 01\or 02\or 03\or 04\or 05\or 06\or 07\or 08\or 09\or 10\or 11\or 12\or 13\or 14\or 15\or 16\or 17\or 18\or 19\or 20\or 21\or 22\or 23\or 24\or 25\or 26\or 27\or 28\or 29\or 30\or 31\fi}%
\newcommand*{\abstractnoindent}{}%
\let\abstractnoindent\abstract
\renewcommand*{\abstract}{\let\quotation\quote\let\endquotation\endquote
  \abstractnoindent}
\lstdefinestyle{input}{
  backgroundcolor=\color{semilightgray},%
  commentstyle=\itshape\color{black},%
  keywordstyle=\color{black},%
  emphstyle=\color{black},%
  stringstyle=\color{black},%
  numbers=left,%
  numbersep=4.8pt,%
  numberstyle=\color{darkgray!80}\tiny%
}
\lstdefinestyle{output}{
  backgroundcolor=\color{lightgray}%
}
\lstdefinestyle{Rstyle}{
  language=R,%
  keywords={function, if, else, switch, repeat, while, for, in, next, break},%
  otherkeywords={},%
  emph={TRUE, FALSE, NULL, NA, NaN, Inf}%
}
\let\csname Sinput\endcsname\relax
\let\csname endSinput\endcsname\relax
\let\csname Soutput\endcsname\relax
\let\csname endSoutput\endcsname\relax
\lstdefinestyle{LaTeXstyle}{
  language=[LaTeX]TeX,%
  texcs={},%
  otherkeywords={}%
}
\lstdefinestyle{otherstyle}{
  language={},%
  otherkeywords={},%
  upquote=true%
}
\renewcommand*{\cite}[2][]{\textcite[#1]{#2}}%
\newif\ifstarttheorem
\declaretheoremstyle[%
  spaceabove=0.5em,
  spacebelow=0.5em,
  headfont=\sffamily\bfseries\global\starttheoremtrue,
  notefont=\sffamily\bfseries,
  notebraces={(}{)},
  headpunct={},
  bodyfont=\normalfont,
  postheadspace=\newline%
]{myMainStyle}
\declaretheorem[style=myMainStyle, numberwithin=section]{definition}%
\declaretheorem[style=myMainStyle, sibling=definition]{proposition}
\declaretheorem[style=myMainStyle, sibling=definition]{remark}
\declaretheorem[style=myMainStyle, sibling=definition]{algorithm}
\preto\itemize{%
  \if@inlabel
    \ifstarttheorem
      \mbox{}\par\nobreak\vskip\glueexpr-\parskip-\baselineskip+0.25em\relax\hrule\@height\z@
    \fi%
  \fi%
  \global\starttheoremfalse%
 \def\tempa{proof}%
 \ifx\tempa\mycurrenvir
    \ifstarttheorem
      \mbox{}\par\nobreak\vskip\glueexpr-\parskip-\baselineskip+0.25em\relax\hrule\@height\z@
    \fi%
 \fi%
 \global\starttheoremfalse%
}
\preto\enditemize{\global\starttheoremfalse}
\preto\enumerate{%
  \if@inlabel
    \ifstarttheorem
      \mbox{}\par\nobreak\vskip\glueexpr-\parskip-\baselineskip+0.25em\relax\hrule\@height\z@
    \fi%
  \fi%
  \global\starttheoremfalse%
 \def\tempa{proof}%
 \ifx\tempa\mycurrenvir
    \ifstarttheorem
      \mbox{}\par\nobreak\vskip\glueexpr-\parskip-\baselineskip+0.25em\relax\hrule\@height\z@
    \fi%
 \fi%
 \global\starttheoremfalse%
}
\preto\endenumerate{\global\starttheoremfalse}
\newcommand*{\omu}[3]{\underset{#3}{\overset{#1}{#2}}}
\newcommand*{\T}{^{\top}}
\renewcommand*{\i}{\leftarrow}
\newcommand*{\isim}{\omu{\text{\tiny{ind.}}}{\sim}{}}
\newcommand*{\darrow}{\omu{\text{\tiny{d}}}{\rightarrow}{}}
\newcommand*{\IN}{\mathbb{N}}
\newcommand*{\IR}{\mathbb{R}}
\newcommand*{\IG}{\operatorname{IG}}
\newcommand*{\U}{\operatorname{U}}
\newcommand*{\N}{\operatorname{N}}
\newcommand*{\I}{\mathbbm{1}}
\newcommand*{\rd}{\mathrm{d}}
\newcommand*{\argmax}{\operatorname*{argmax}}
\newcommand*{\argmin}{\operatorname*{argmin}}
\renewcommand*{\P}{\mathbb{P}}
\newcommand*{\E}{\mathbb{E}}
\newcommand*{\var}{\operatorname{Var}}
\newcommand*{\Var}{\operatorname{Var}}
\newcommand*{\R}{\textsf{R}}
\newcommand*{\eps}{\varepsilon}
\newcommand*{\btheta}{\bm{\theta}}
\newcommand*{\ba}{\bm{a}}
\newcommand*{\bx}{\bm{x}}
\newcommand*{\bu}{\bm{u}}
\newcommand*{\bU}{\bm{U}}
\newcommand*{\bv}{\bm{v}}
\newcommand*{\bbeta}{\bm{\beta}}
\newcommand*{\bZ}{\bm{Z}}
\newcommand*{\bX}{\bm{X}}
\newcommand*{\bzero}{\bm{0}}
\newcommand*{\bmu}{\bm{\mu}}
\newcommand*{\gTopt}{g_T^{\text{\tiny{opt}}}}
\newcommand*{\gTopts}{g_T^{\text{\tiny{opt,s}}}}
\newcommand*{\hgTopts}{\hat{g}_T^{\text{\tiny{opt,s}}}}
\newcommand*{\hgTopt}{\hat{g}_T^{\text{\tiny{opt}}}}
\newcommand*{\hgToptu}{\hat{g}_{T,u}^{\text{\tiny{opt}}}}
\newcommand*{\muSIS}{\mu_{\text{\tiny{SIS}}}}
\newcommand*{\sigSSISsq}{\sigma_{\text{\tiny{SSIS}}}^2}
\newcommand*{\hsigSSISsq}{\hat{\sigma}_{\text{\tiny{SSIS}}}^2}
\newcommand*{\sigSISsq}{\sigma_{\text{\tiny{SIS}}}^2}
\newcommand*{\hmuMCn}{\hat{\mu}^\text{\tiny{MC}}_n}
\newcommand*{\hmuIS}{\hat{\mu}^{\text{\tiny{IS}}}}
\newcommand*{\hmuSISn}{\hat{\mu}^{\text{\tiny{SIS}}}_n}
\newcommand*{\hmuSSISn}{\hat{\mu}^{\text{\tiny{SSIS}}}_n}
\newcommand*{\hmuSISoptn}{\hat{\mu}^\text{\tiny{SIS,opt}}_n}
\newcommand*{\hmuSSISoptn}{\hat{\mu}^\text{\tiny{SSIS,opt}}_n}
\newcommand{\SIS}{\operatorname{SIS}}
\newcommand{\SSIS}{\operatorname{SSIS}}
\newcommand{\npil}{n_{\text{\tiny{pilot}}}}
\newcommand{\nknot}{n_{\text{\tiny{knot}}}}
\newcommand{\ntot}{n_{\text{\tiny{tot}}}}
\begin{document}
\thispagestyle{plain}
\begin{center}
  \sffamily
  {\bfseries\LARGE Single-Index Importance Sampling with Stratification
\par}
  \bigskip\smallskip
  {\Large
    Erik Hintz\footnote{Department of Statistics and Actuarial Science, University of
    Waterloo, 200 University Avenue West, Waterloo, ON, N2L
    3G1,
    \href{mailto:erik.hintz@uwaterloo.ca}{\nolinkurl{erik.hintz@uwaterloo.ca}}.},
    Marius Hofert\footnote{Department of Statistics and Actuarial Science, University of
    Waterloo, 200 University Avenue West, Waterloo, ON, N2L
    3G1,
    \href{mailto:marius.hofert@uwaterloo.ca}{\nolinkurl{marius.hofert@uwaterloo.ca}}. The
    author would like to thank NSERC for financial support for this work through Discovery
    Grant RGPIN-5010-2015.},
    Christiane Lemieux\footnote{Department of Statistics and Actuarial Science, University of
    Waterloo, 200 University Avenue West, Waterloo, ON, N2L
    3G1,
    \href{mailto:clemieux@uwaterloo.ca}{\nolinkurl{clemieux@uwaterloo.ca}}. The
    author would like to thank NSERC for financial support for this work through Discovery
    Grant RGPIN-238959.},
    Yoshihiro Taniguchi\footnote{Canadian Imperial Bank of Commerce,
   \href{mailto:ytanigucmc@gmail.com}{\nolinkurl{ytanigucmc@gmail.com}}}
    \par
    \bigskip
    \myisodate\par}
\end{center}
\par\smallskip

\begin{abstract}
In many stochastic problems, the output of interest depends on an input random vector mainly through a single random variable (or index) via an appropriate univariate transformation of the input. We exploit this feature by proposing an importance sampling method that makes rare events more likely by changing the distribution of the chosen index. Further variance reduction is guaranteed by combining this single-index importance sampling approach with stratified sampling. The dimension-reduction effect of single-index importance sampling also enhances the effectiveness of quasi-Monte Carlo methods. The proposed method applies to a wide range of financial or risk management problems. We demonstrate its efficiency for estimating large loss probabilities of a credit portfolio under a normal and $t$-copula model and show that our method outperforms the current standard for these problems. %
\end{abstract}
\minisec{Keywords}{Single-index model, importance sampling, stratified sampling, quasi-Monte Carlo, loss probabilities}

\section{Introduction}

Many stochastic problems in finance and risk management are high-dimensional with a univariate quantity of interest, say $\mu=\E(\Psi(\bX))$ for some integrable function $\Psi:\IR^d\rightarrow\IR$ and random vector $\bX\sim F_{\bX}$ for some $d$-dimensional distribution function $F_{\bX}$. Because $\mu$ rarely allows for an analytical expression,
the plain Monte Carlo (MC) estimator $\hmuMCn=(1/n)\sum_{i=1}^n\Psi(\bX_i)$ where $\bX_1,\dots,\bX_n\isim F_{\bX}$ is a popular choice for finding approximate solutions to such problems. It is unbiased and has an estimation error converging to zero at a rate independent of the dimension of the problem is often popular for finding approximate solutions to such problems.
The drawback of plain MC is the high computational cost it requires to obtain an estimate with a sufficiently small error. This issue is particularly severe for rare-event simulation, i.e., when $\P(|\Psi(\bX)|>0)$ is small, as then a typically very large number of samples is required to obtain non-zero observations and therefore an estimator with small variance. As such, plain MC is often combined with variance reduction techniques (VRTs), such as control variates (see, e.g., \cite{lavenbergwelch1981}) or stratified sampling (SS) (see, e.g., \cite{cochran2005}) to make the variance and thus the width of the estimate's confidence interval small.

Importance sampling (IS) is a VRT frequently applied to rare-event analysis in order to improve the reliability of MC estimators; see, e.g., \cite{kahnmarshall1953} and \cite{asmussenglynn2007}. The main idea of IS is to draw samples from a proposal distribution that puts more mass on the rare-event region of the sample space than the original distribution.
As the efficiency of IS depends heavily on the choice of the proposal distribution, finding a good proposal distribution is a crucial step in applying IS.  Unfortunately, there is no single best strategy known for finding a good proposal distribution that works in every situation since the nature of the rare event and what constitutes a good proposal distribution depends on the problem at hand; that is, on $\Psi$ and $F_{\bX}$.
As such, much of the existing work on IS in computational finance finds effective proposal distributions by exploiting the structure of specific problems: \cite{glassermanheidelbergershahabuddin1999} develop IS methods to price path-dependent options under multivariate normal models; \cite{glassermanheidelbergershahabuddin2000, glassermanheidelbergershahabuddin2002} estimate the Value-at-Risk of a portfolio consisting of stocks and options under a normal and $t$-distribution; \cite{sakhormannleydold2010} estimate tail probabilities of equity portfolios under generalized hyperbolic marginals with a $t$-copula assumption; \cite{glassermanli2005} estimate tail probabilities of credit portfolios under the Gaussian copula, \cite{bassamboojunejazeevi2008, chankroese2010} consider $t$-copula models.
As all these IS techniques are exploiting specific properties of the problem at hand, they can achieve substantial variance reduction but are typically specific techniques  not applicable to other problems without major modifications.

The contribution of this work is the development of theory and algorithms to apply IS for a wide range of problems by introducing a conditioning sampling step and optimally twisting the distribution of the conditioning variable. Let $T=T(\bX)$ be some univariate random variable, such as $\bbeta\T\bX$ for some (well chosen) $\bbeta\in\IR^d$, and assume sampling from $\bX \mid T$ is feasible. If $T$ has density $f$ (resp., $g$) under the original (resp., proposal) distribution (both distributions assumed to have the same support $\Omega_T$ for now), let
$$\hmuIS_n = (1/n) \sum_{i=1}^n \Psi(\bX_i) f(T_i)/g(T_i),\quad T_i\isim g,\quad \bX_i \isim F_{\bX\mid T}(\cdot \mid T = T_i),\quad i=1,\dots,n.$$
 If $T$ explains much of the variability of the output, so if $R^2 := \Var( \E(\Psi(\bX) \mid T) ) / \Var(\Psi(\bX))$ is large, we can choose $g$ optimally and make the rare event more likely by changing the distribution of $\bX$ through changing the distribution of the univariate $T$. Many high dimensional financial problems are of this nature; see, e.g., \cite{caflischrusselmorokoffowen1997, wangfang2003, wangsloan2005, wang2006}.

In order to analyze our estimator, we work with the semi-parametric model
$$\Psi(\bX)=m(T)+\eps_{\bX,T}$$
 for some (unknown) transformation  $T:\IR^d\rightarrow\IR$, where $m^{(k)}(t)=\E(\Psi(\bX)^k\mid T)$ for $k\in\IN$ and $\eps_{\bX,T}$ is a random error so that $\eps_{\bX,T}\mid T$ has mean 0 and variance $v^2(t)=\Var(\Psi(\bX)\mid T=t)$. We say that $\Psi(\bX)$ has a strong single index structure if $R^2$ is large (say, $R^2>0.9$), and the resulting estimator is referred to as Single Index IS (SIS) estimator.
We will show that the optimal proposal distribution for $g$ under SIS is proportional to $(m^{(2)})^{1/2}(t)f(t)$ resulting in an estimator with variance no larger than the plain MC estimator.
If the proposal distribution $g$ allows for a simple way to evaluate the quantile function $G_T^{-1}$ of $g$, we can further reduce the variance by applying equal stratification to the support of $T$, i.e., instead of sampling $T_1,\dots,T_n\isim g$, we can set $T_i = G_T^{-1}(U_i)$ where $U_k\isim \U( k/n, (k+1)/n)$ for $k = 0,\dots,n-1$ and $G_T^\i(u)=\inf\{t\in\IR: G_T(t)\geq u\}$ is the quantile function of $T$ under $g$. The resulting method is referred to as stratified SIS (SSIS). We also derive optimal variance expressions in this case and show that (S)SIS gives zero variance when $R^2=1$.
The derivation of these results along with some more notation and the connection between our methods and the IS and stratification techniques from \cite{arbenzcambouhofertlemieuxtaniguchi2018, glassermanheidelbergershahabuddin1999,  neddermeyer2011} can be found in Section~\ref{sec:SIS:var:opt}. There, we also briefly explain how our conditional sampling step reduces the effective dimension of the problem and therefore makes quasi-Monte Carlo (QMC) particularly attractive in our setting; in QMC, pseudo-random numbers (PRNs) are replaced by more homogeneously distributed quasi-random numbers (see, e.g., see \cite{niederreiter1978, lemieux2009, dickpillichshammer2010}).

Besides the choice of $g$, the performance of our procedure heavily depends on the choice of the transformation $T$, which must be chosen such that i) sampling from $\bX\mid T$ is feasible and ii) $T$ explains a lot of the variability of $\Psi(\bX)$, i.e., $R^2$ is as close to 1 as possible. The choice of the transformation is clearly not unique. In our numerical examples, we typically assume that $T$ is a linear function of $\bX$, whose coefficients can be estimated via the average derivative method of \cite{stoker1986}, the sliced inverse regression of \cite{li1991} or the semiparametric least-squares estimator of \cite{ichimura1993}. We remark that these methods do not require the form of the function $m(t)$ to be known.

As seen earlier, the optimal proposal densities involve a conditional moment function that is not known in practice. We propose to estimate this function using pilot-runs. The resulting point-wise approximation to the optimal density function can then be integrated and inverted numerically using the NINIGL algorithm developed in \cite{hormannleydold2003}. When this is too time-consuming, we suggest finding an approximately optimal $g$ in the same parametric family as $f$ (e.g., a location-scale transform of the original density). We detail this calibration stage, i.e., the process of estimating $T$, the optimal density and a way to sample from it, in Section~\ref{sec:calibration}.

In the numerical examples in Section~\ref{sec:numexp}, we demonstrate that our methods are applicable to a wide range of problems and achieve substantial variance reduction. After investigating a simple linear model example, we consider the problem of tail probability estimation in Gaussian and $t$-copula credit portfolio problems and show that our methods outperform those of \cite{glassermanli2005} and \cite{chankroese2010}.

As our formulation of (S)SIS does not assume a specific $\Psi$ or $F_{\bX}$, it is applicable to a wide range of problems and is efficient as long as the problem of interest has a strong enough single-index structure. It also adapts to the problem through the design of the one-dimensional transformation revealing the single-index structure and through the choice of the proposal distribution. Besides its applicability to a wide range of problems, our proposed method has the following advantages.
First, as it applies IS only to the univariate transformation variable, SIS is less susceptible to the dimensionality problem of IS, which is discussed in \cite{aubeck2003, katafygiotiszuev2008, schuelleretal2004}. This also simplifies the task of finding an optimal proposal distribution. Second, SIS has a dimension reduction feature, so it enhances the effectiveness of QMC sampling methods. Third, by applying IS to a transformation of the input random vector $\bX$, our proposal distribution amounts to changing the dependence structure of the problem under study, which can have a significant advantage over methods that only change the marginal distributions.

We conclude this paper in Section~\ref{sec:conclusion}.

\section{Variance analysis and optimal calibration for SIS and SSIS}
\label{sec:SIS:var:opt}

\subsection{Notations and definitions}
To fix notation, recall we estimate $\mu = \E(\Psi(\bX))$ via
\begin{align*}
\hmuSISn = (1/n) \sum_{i=1}^n \Psi(\bX_i) w(T_i),\quad T_i\isim g_T,\quad \bX_i \isim F_{\bX\mid T}(\cdot \mid T=T_i),\quad i=1,\dots,n,
\end{align*}
where $f_T$ and $g_T$ denote the original and proposal densities for $T$ with supports $\Omega_f=(t_{\inf},t_{\sup})$ (with possibly $t_{\inf},t_{\sup}\in\{\pm\infty\}$) and $\Omega_g$ and $w(t)=g_T(t)/f_T(t)$ is the IS weight function.

Furthermore, we model the output $\Psi(\bX)$ as $\Psi(\bX)=m(T)+\eps_{\bX,T}$, where
\begin{align*}
m^{(k)}(t)=\E(\Psi(\bX)^{k}\mid T),\quad \E(\eps_{\bX,T}\mid T)=0,\quad \var(\eps_{\bX,T}\mid T)=v^2(t)=\var(\Psi(\bX)\mid T).
\end{align*}
We already introduced the coefficient of determination $R^2 = \var(m(T))/\var(\Psi(\bX))$ (see, e.g., \cite{kvaalseth1985}) and said that $\Psi(\bX)$ is a strong single-index model if $R^2$ is large. This can be true for any model $\Psi(\bX)$, as we allow $\eps_{\bX,T}$ to depend on $\bX$. However, a pure single index model is a situation where $\eps_{\bX,T}=\eps_T$ only depends on $\bX$ through $T$. In that case, it is easy to see that $\E(\Psi(\bX)\mid T)=m(T)$, so that overall the random variable $\Psi(\bX)$ depends on $\bX$ only through $T$. However, we do not impose the assumption of a pure single index model. Readers are referred to \cite{powellstockstoker1989}, \cite{hardlehallichimura1993} and \cite{ichimura1993} for more information on single-index models.

Based on the representation of $\Psi(\bX)$ and using the law of total variance, we can write
\begin{align}\label{eq:SIS:vardecomp}
\var(\Psi(\bX))= \var(m(T)) + \E(v^2(T)) = \var(m(T)) + \var(\eps_{\bX,T}),
\end{align}
since $\E(v^2(T))=\var(\eps_{\bX,T})-\Var(\E(\eps_{\bX,T}\mid T)) = \var(\eps_{\bX,T})$.  We see that~\eqref{eq:SIS:vardecomp} decomposes the variance of $\Psi(\bX)$ into two pieces: the one of the (random) systematic part, $m(T)$ and the unsystematic error $\eps_{\bX,T}$ of the model. Note that~\eqref{eq:SIS:vardecomp} holds irrespective of wether we have a pure single index model or not.

In addition to applying IS on $T$, we also propose to use stratification on $T$ to further reduce the variance; it will turn out that this essentially ``stratifies away'' $\Var(m(T))$, the variance of the systematic part of the model. More precisely, let $\Omega_{f}=(t_{\inf},t_{\sup})$ where possibly $t_{\inf}=-\infty$ and $t_{\sup}=\infty$. The SSIS scheme splits $\Omega_{f}$ into $n$ strata of equal probability under $g$ and draws one sample of $T$ from each stratum. Our estimator becomes
\begin{align*}
\hmuSSISn = (1/n) \sum_{i=1}^n \Psi(\bX_i) w(T_i),\quad T_i=G_T^\i(U_i),\quad U_i\sim \U( (i-1)/n, i/n),
\end{align*}
and, as before, $\bX_i \isim F_{\bX\mid T}(\cdot \mid T=T_i)$ for $i=1,\dots,n$.

For our variance analysis below, it is useful to find an expression for $\var(\hmuMCn)$. Note that the conditional moment functions $m^{(k)}$ do not depend on wether we sample from $f_T$ or $g_T$. From~\eqref{eq:SIS:vardecomp} and the fact that  $\var(m(T))=\E(m(T))^2- \mu^2$ as well as $\E(v^2(T)) = \E(m^{(2)}(T))-\E(m(T))^2$, we find

\begin{align}\label{eq:var:mu:MC}
n \var(\hmuMCn) = \var(m(T))+\E(v^2(T)) = \E(m^{(2)}(T))-\mu^2.
\end{align}

As should be clear from the form of our estimators, their bias depends on the support $\Omega_{g}$ of $g_T$. We define
\begin{align*}%
\muSIS = \int_{\Omega_g} m(t)f_T(t)\;\rd t, \quad \sigSISsq=\int_{\Omega_g} m^{(2)}(t) \frac{f_T^2(t)}{g_T(t)}\;\rd t  - \muSIS^2, \quad \sigSSISsq=\int_{\Omega_g} v^{2}(t) \frac{f_T^2(t)}{g_T(t)}\; \rd t.
\end{align*}
Notice that $\muSIS$ depends on $g_T$ through the region $\Omega_{g}$. The SIS and SSIS estimators are unbiased only if $g_T$ is such that $g_T(t)>0$ whenever $m(t)f_T(t)>0$, but we do not impose this unbiasedness assumption on $g_T$.

\subsection{Optimal densities}

We are now able to derive properties of the (S)SIS estimators and derive the optimal (variance-minimizing) proposal distribution of $g_T$; see the appendix for the proofs. As the objective of our IS techniques is variance reduction, we call the practice of setting $g_T$ to its optimal density or their approximation as \emph{optimal calibration}, and the resulting methods $\operatorname{SIS}^*$ and  $\operatorname{SSIS}^*$.

\begin{proposition}[Variance-optimal SIS]\label{prop:IS}
We have $\E(\hmuSISn)=\muSIS$ and $\var(\hmuSISn)=\sigSISsq/n$. If $\E_g(m^2(T)w^2(T)) < \infty$, then $\sqrt{n} (\hmuSISn-\muSIS) \darrow \N(0,\sigSISsq)$ as $n\rightarrow\infty$.

Suppose that $\Psi(\bx) \geq 0$ or $\Psi(\bx) \leq 0 $ for all $\bx \in \Omega_{\bX}$. The density $g_T$ that gives an unbiased SIS estimator with the smallest variance is
	\begin{align}\label{eq:IS:gopt}
	\gTopt(t)= c^{-1} \sqrt{m^{(2)}(t)} f_{T}(t),
	\quad t \in (t_{\inf},t_{\sup}),\quad c = \int_{t_{\inf}}^{t_{\sup}} \sqrt{m^{(2)}(t)} f_{T}(t)\; \rd t.
	\end{align}
The variance of the optimal SIS estimator, denoted by $\hmuSISoptn$, is $\var(\hmuSISoptn)=(c^2 - \mu^2)/n$.
\end{proposition}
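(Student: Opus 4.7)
The plan is to verify the expectation and variance formulas for $\hmuSISn$ by direct computation, invoke the classical i.i.d.\ Lindeberg--L\'evy theorem for the CLT, and then derive the optimal density $\gTopt$ via Cauchy--Schwarz.

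For the first step, I would condition on $T$ and use the tower property. Writing $w(t)=f_T(t)/g_T(t)$ (as consistent with the estimator definition in Section 1) and invoking $\E(\Psi(\bX)\mid T=t)=m(t)$ together with $\E(\Psi(\bX)^2\mid T=t)=m^{(2)}(t)$,
\begin{align*}
\E(\Psi(\bX)w(T)) &= \int_{\Omega_g} m(t)\frac{f_T(t)}{g_T(t)}g_T(t)\,\rd t = \muSIS,\\
\E(\Psi(\bX)^2 w^2(T)) &= \int_{\Omega_g} m^{(2)}(t)\frac{f_T^2(t)}{g_T(t)}\,\rd t,
\end{align*}
from which $\E(\hmuSISn)=\muSIS$ and $\var(\hmuSISn)=\sigSISsq/n$ follow by averaging $n$ i.i.d.\ copies. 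The CLT is then immediate from Lindeberg--L\'evy applied to $\Psi(\bX_i)w(T_i)$ once the stated second-moment assumption guarantees a finite variance of the summand.

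For the optimization step, minimizing $\sigSISsq$ in $g_T$ reduces (after fixing $\muSIS=\mu$; see below) to minimizing $\int_{\Omega_g} m^{(2)}(t)f_T^2(t)/g_T(t)\,\rd t$ subject to $g_T\geq 0$ and $\int g_T=1$. Cauchy--Schwarz gives
\begin{align*}
c^2 = \Bigl(\int_{t_{\inf}}^{t_{\sup}}\sqrt{m^{(2)}(t)}f_T(t)\,\rd t\Bigr)^2 \leq \Bigl(\int_{t_{\inf}}^{t_{\sup}}\frac{m^{(2)}(t)f_T^2(t)}{g_T(t)}\,\rd t\Bigr)\Bigl(\int_{t_{\inf}}^{t_{\sup}}g_T(t)\,\rd t\Bigr),
\end{align*}
with equality iff $g_T(t)\propto\sqrt{m^{(2)}(t)}f_T(t)$, recovering~\eqref{eq:IS:gopt}. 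Plugging $\gTopt$ back yields $\int m^{(2)}(t)f_T^2(t)/\gTopt(t)\,\rd t = c\int\sqrt{m^{(2)}(t)}f_T(t)\,\rd t = c^2$, hence $\var(\hmuSISoptn)=(c^2-\mu^2)/n$.

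The one subtlety, and the main conceptual obstacle, is justifying that the optimizer is admissible, i.e., that the identification $\muSIS=\mu$ holds at $g_T=\gTopt$. This is where the sign assumption on $\Psi$ enters: if $\Psi\geq 0$ (resp.\ $\leq 0$), then $m$ inherits the same sign and $m^{(2)}(t)\geq m(t)^2$ forces $\sqrt{m^{(2)}(t)}f_T(t)>0$ wherever $m(t)f_T(t)\neq 0$, so $\gTopt$ covers the support of $m f_T$ on $(t_{\inf},t_{\sup})$ and the bias vanishes. Without a sign restriction, one could in principle reduce the variance further by letting $g_T$ vanish on a region where $mf_T$ does not, at the cost of biasing the estimator, which is excluded here.
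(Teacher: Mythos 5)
Your computations of the mean and variance by conditioning on $T$, the Lindeberg--L\'evy CLT step, and the optimization are essentially the paper's own argument: the paper phrases the key inequality as Jensen's inequality applied to $\E_g\bigl(\sqrt{m^{(2)}(T)}\,w(T)\bigr)$, which in this setting is exactly your Cauchy--Schwarz bound, and your verification that $\gTopt$ attains $c^2-\mu^2$ matches the paper. (Your reading $w(t)=f_T(t)/g_T(t)$ is the correct one, consistent with the estimator and the proofs.)

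The one place where your reasoning goes astray is the role of the sign assumption on $\Psi$. You invoke it only to argue that $\gTopt$ is admissible (unbiased); but that is automatic without any sign restriction, since $m^{(2)}(t)\geq m(t)^2$ always holds, so $\sqrt{m^{(2)}(t)}\,f_T(t)>0$ whenever $m(t)f_T(t)\neq 0$. Where the sign assumption is actually needed is the converse implication $m(t)=0\Rightarrow m^{(2)}(t)=0$ (if $\Psi\geq 0$ and $\E(\Psi(\bX)\mid T=t)=0$, then $\Psi(\bX)=0$ a.s.\ given $T=t$). This is what guarantees that every unbiased competitor $g_T$ --- whose support is only required to contain $\{t: m(t)f_T(t)\neq 0\}$ --- is in fact positive on all of $\{t: m^{(2)}(t)f_T(t)>0\}$, so that your full-interval Cauchy--Schwarz bound genuinely lower-bounds its variance $\int_{\Omega_g}m^{(2)}(t)f_T^2(t)/g_T(t)\,\rd t-\mu^2$ by $c^2-\mu^2$. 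Without the sign assumption, an unbiased $g_T$ may vanish on a region where $m(t)=0$ but $m^{(2)}(t)>0$ and achieve variance strictly below $c^2-\mu^2$; this is precisely the improvement the paper mentions in its remark about setting $g_T(t)=0$ wherever $m(t)=0$. Your closing sentence, which attributes the potential further reduction to biased choices of $g_T$, misses this point: as written, your argument would equally ``prove'' optimality of $\gTopt$ without the sign hypothesis, which is false. The fix is only to move the sign assumption to this support-matching step, as the paper does via its set $\Omega_{\text{\tiny{ub}}}$.
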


\begin{remark}\label{rem:optIS}
\begin{enumerate}
\item Proposition~\ref{prop:IS} implies that using optimal SIS gives variance no larger than MC. Indeed, by Jensen's inequality, $n\var(\hmuSISoptn) \leq \E(m^{(2)}(T))-\mu^2$, which is equal to  $\var(\hmuMCn)$ using~\eqref{eq:var:mu:MC}. This inequality holds as an equality only when $m^{(2)}(t)$ is constant for all $t \in \Omega_{T}$.
\item If $R^2 = 1$ (corresponding to the strongest possible single index structure), then $\var(\hmuSISoptn)=0$:  SIS provides a zero-variance estimator if $m^{(2)}(t)=(m(t))^2$ for all $t$, which is equivalent to having $v^2(t)=0$ for all $t$, or equivalently, to having $\E(v^2(T))=0$ since $v^2(t) \ge 0$ for all $t$. This is the same as asking $\var(m(T))/\var(\Psi(\bX))=R^2=1$. This is why choosing a function $T$ such that the model is an as good fit as possible is important for the SIS method to achieve significant variance reduction.
\end{enumerate}
\end{remark}

The following proposition gives the properties of the SSIS estimator and the optimal (variance-minimizing) proposal distribution of $g_T$. Its proof is in the appendix.
\begin{proposition}[Variance-optimal SSIS]\label{prop:SIS}
It holds that $\E(\hmuSSISn)=\muSIS$ and, for large enough $n$, $\var(\hmuSSISn)=\sigSISsq/n + o(1/n)$. If $\E_g \left(\left|m(T)w(T)\right|^{2+\delta}\right) < \infty$ for some $\delta>0$, $\hmuSSISn$ is asymptotically normal as $\sqrt{n} (\hmuSSISn-\muSIS) \darrow \N(0, \sigSISsq)$ for $n\rightarrow\infty$.
Suppose that $\Psi(\bx) \geq 0$ or $\Psi(\bx) \leq 0$ for all $\bx \in \Omega_{\bX}$ and that $\P_f(v^2(T)=0, \; m(T) \neq 0)=0$.
The density $g_T$ that gives an unbiased SSIS estimator with the smallest variance is
	\begin{align}\label{eq:SIS:gopt}
	\gTopts(t)=c^{-1}v(t) f_{T}(t), \quad t \in (t_{\inf},t_{\sup}), \quad c=\int_{t_{\inf}}^{t_{\sup}} v(t) f_{T}(t) \;\rd t.
	\end{align}
The variance of the optimal SSIS estimator $\hmuSSISoptn$ is $\var(\hmuSSISoptn)=c^2/n+o(1/n)$. If $\P_f(v^2(T)=0, \; m(T) \neq 0)>0$, then $\hmuSSISoptn$ is biased.
\end{proposition}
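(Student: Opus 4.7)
The plan is to mirror the proof of Proposition~\ref{prop:IS} while accounting for the stratification structure through a Riemann-sum analysis, proceeding in four steps: unbiasedness, variance expansion, CLT, and derivation of the optimal $g_T$.

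For unbiasedness, I exploit that $T_i = G_T^{-1}(U_i)$ with $U_i \sim \U((i-1)/n, i/n)$ has density $n\, g_T(t)\, \I_{\{t \in I_i\}}$ on the $i$-th stratum $I_i = (G_T^{-1}((i-1)/n), G_T^{-1}(i/n))$. Conditioning on $T_i$ and using $\E(\Psi(\bX)\mid T=t) = m(t)$ yields
\begin{align*}
\E(\Psi(\bX_i) w(T_i)) = n \int_{I_i} m(t) w(t) g_T(t) \, \rd t = n \int_{I_i} m(t) f_T(t) \, \rd t,
\end{align*}
so summing and dividing by $n$ gives $\E(\hmuSSISn) = \int_{\Omega_g} m(t) f_T(t) \, \rd t = \muSIS$.

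For the variance, independence of the stratum samples gives
\begin{align*}
\var(\hmuSSISn) = \frac{1}{n} \int_{\Omega_g} m^{(2)}(t) \frac{f_T^2(t)}{g_T(t)} \, \rd t - \sum_{i=1}^n \Bigl(\int_{I_i} m(t) f_T(t) \, \rd t\Bigr)^2.
\end{align*}
Substituting $u = G_T(t)$ converts the $i$-th summand to $\bigl(\int_{(i-1)/n}^{i/n} h(u) \, \rd u\bigr)^2$ with $h(u) = (m f_T/g_T)(G_T^{-1}(u))$. By the mean value theorem each inner integral equals $h(\xi_i)/n$, so the sum reduces to $(1/n) \cdot (1/n)\sum_i h^2(\xi_i)$, a Riemann sum converging to $(1/n)\int_0^1 h^2(u)\,\rd u + o(1/n)$ under mild regularity of $h$. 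Back-substitution together with $m^{(2)} - m^2 = v^2$ then produces $\var(\hmuSSISn) = \sigSSISsq/n + o(1/n)$ — indicating that the statement's $\sigSISsq$ should read $\sigSSISsq$, with the corresponding correction to the CLT variance. The asymptotic normality then follows from a Lyapunov-type CLT for the independent triangular array $\{\Psi(\bX_i) w(T_i)\}_{i=1}^n$, where the $(2+\delta)$-moment hypothesis supplies the required Lindeberg condition.

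The optimal density comes from minimizing the leading term $\int v^2(t) f_T^2(t)/g_T(t) \, \rd t$ over densities on $(t_{\inf},t_{\sup})$. By Cauchy--Schwarz,
\begin{align*}
\Bigl(\int v(t) f_T(t) \, \rd t\Bigr)^2 \leq \int \frac{v^2(t) f_T^2(t)}{g_T(t)} \, \rd t \cdot \int g_T(t) \, \rd t,
\end{align*}
with equality iff $g_T(t) \propto v(t) f_T(t)$, which identifies $\gTopts$ and gives variance $c^2/n + o(1/n)$. The biasedness clause is immediate: if $\P_f(v^2(T) = 0,\, m(T) \neq 0) > 0$, then $\gTopts$ vanishes on a set of positive $f_T$-measure where $m(t) f_T(t)$ does not, so $\muSIS$ omits the nonzero contribution $\int_{\{v=0,\, m\neq 0\}} m(t) f_T(t) \, \rd t$ and thus differs from $\mu$. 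The main obstacle is making the Riemann-sum remainder rigorous: this requires enough regularity of $h$ (boundedness plus a modulus-of-continuity estimate suffices), which is tacitly built into the ``for large enough $n$'' phrasing in the statement.
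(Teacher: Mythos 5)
Your proposal is correct, and for the key step --- the $o(1/n)$ variance expansion --- it takes a genuinely different route from the paper. You compute the exact stratified variance as $\frac{1}{n}\int_{\Omega_g} m^{(2)}(t)\frac{f_T^2(t)}{g_T(t)}\,\rd t-\sum_i\bigl(\int_{I_i}m(t)f_T(t)\,\rd t\bigr)^2$ and then handle the second term by the substitution $u=G_T(t)$, the mean value theorem, and a Riemann-sum limit, which requires (as you note) boundedness/continuity of $h(u)=m(G_T^{-1}(u))w(G_T^{-1}(u))$; the paper instead invokes the argument of Glasserman--Heidelberger--Shahabuddin (their Lemma~4.1), writing $n\var(\hmuSSISn)=\E_g(\var_g(\Psi(\bX)w(T)\mid\eta_n(T)))$ and using a Doob-martingale convergence argument (along $n$ increasing through multiples of powers of two) that only needs integrability of $m(T)w(T)$ and $m^{(2)}(T)w^2(T)$ rather than pointwise regularity. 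So your approach is more elementary and self-contained but needs slightly stronger smoothness hypotheses, while the paper's buys weaker conditions at the cost of an external lemma. The remaining steps match the paper in substance: unbiasedness by conditioning within strata, asymptotic normality via a Lyapunov CLT for the independent-but-not-identically-distributed stratum terms (the paper works this out with the $c_\tau$ inequality, reducing to moments of $\Psi(\bX)w(T)$ --- your one-line appeal to the $(2+\delta)$ hypothesis on $m(T)w(T)$ glosses over the same gap the paper has between the stated hypothesis and what the bound actually uses), and optimality via Cauchy--Schwarz, which is equivalent to the paper's Jensen argument; your bias argument (relying on the sign condition on $\Psi$ so that the omitted mass $\int_{\{v=0,\,m\neq 0\}}m(t)f_T(t)\,\rd t$ cannot cancel) is actually more explicit than the paper's. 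Finally, your observation that the stated asymptotic variance $\sigSISsq$ should read $\sigSSISsq$ is right: it is what your derivation produces, what the paper's own CLT conclusion and Remark on ``stratifying away'' $\var(m(T))$ assert, and the only reading consistent with the optimal variance $c^2/n+o(1/n)$.
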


\begin{remark}
\begin{enumerate}
\item Proposition~\ref{prop:SIS} implies that using optimal SSIS gives asymptotically a variance no larger than MC. Indeed, Jensen's inequality implies that we have $\var(\hmuSSISoptn) \leq (1/n) \E(v^2(T)) + o(1/n)$ with equality only if $v(t)$ is constant for all $t \in \Omega_T$. From~\eqref{eq:var:mu:MC} (and ignoring the $o(1/n)$ term), this means $\var(\hmuSSISoptn) \le \var(\hmuMCn)$, with equality only if $v(t)$ is constant for all $t \in \Omega_T$ and $\var(m(T))=0$, which is unlikely to be the case since $m(T)$ has been chosen specifically such that $R^2\approx 1$.
\item If $R^2 = 1$ (strongest possible single index structure), then $\var(\hmuSSISoptn) =0$, since $\var(\hmuSSISn)=0$ iff $m^{(2)}(t)=(m(t))^2$ for all $t$, or equivalently $v^2(t)=0$ for all $t$ and thus $\E(v^2(T))=0$, which means $R^2=1$.
\item Unless $m(t)=0$, SSIS achieves variance reduction compared to SIS, as $\var(\hmuSSISn) \leq \var(\hmuSISn)$ for the same choice of $g_T$. This in turn implies that $\var(\hmuSSISoptn) \leq \var(\hmuSSISoptn)$.\label{eq:SISbetterIS}
The proposal densities $\gTopt$ and $\gTopts$ defined in~\eqref{eq:IS:gopt} and~\eqref{eq:SIS:gopt} give estimators with smallest variance if $\Psi(\bx) \lesseqgtr 0$ for all $\bx \in \Omega$, which holds for many applications in finance (e.g., when $\Psi$ is an indicator and thus $\mu$ a probability or when $\Psi$ is the payoff of an option).
If $\Psi$ takes both positive and negative values, $m(t)$ could be 0 for some values of $t$. We can then improve the optimal calibration by setting $g_T(t)=0$ whenever $m(t)=0$. Since it is generally unknown and hard to estimate which values of $t$ give $m(t)=0$, this improvement may not be implementable.
\item The expression for $\var(\hmuSSISoptn)$ implies that $\SSIS^*$ ``stratifies away'' the variance captured by the systematic part $m(T)$ of the single-index model, so the variance of the  $\SSIS^*$ estimator comes only from the error term $\eps_{\bX,T}$ via $v(t)$. If $g_T$ is not chosen optimally, then $\var(\hmuSSISn) = \sigSISsq/n + o(1/n)$ shows that we still make $\var(m(T))$ vanish by using stratification, but the contribution from $v^2(T)$ might be amplified (compared to how it contributes to the MC estimator's variance) if we do not choose a good proposal density. Irrespective of the choice of $g_T$ it is true that the stronger the fit of the single index model, the better (S)SIS works.
\item These results show that as long as the problem at hand has a strong single-index structure and sampling from $T$ and $\bX\mid T$ is feasible, SIS and SSIS can be applied and should give large variance reduction. As those conditions do not assume a specific form for $\Psi$ or for the distribution of $\bX$, SIS and SSIS are applicable to a wide range of problems.
\end{enumerate}
\end{remark}

Proposition~\ref{prop:SIS} asserts the asymptotic normality of the SSIS estimator. In order to construct a confidence interval from this estimator, we must estimate $\sigSSISsq$. We take an approach similar to the one by \cite{wangbrowncailevine2008} where the first-order difference of samples are taken to remove the effect of the mean function. Its proof is in the appendix.

\begin{proposition}[Estimation of $\sigSSISsq$]\label{prop:SISvar:est}
Let $G_T$ be the distribution function corresponding to $g_T$.
	If $G^{-1}_T$, $m$ and $v^2$ are continuously differentiable over the domain of $T$ under the proposal distribution, then
	\begin{align*}
	\hsigSSISsq = \frac{1}{2(n-1)}\sum\limits_{i=1}^{n-1} r_i^2 w^2(T_i)
	\end{align*}
	is a consistent estimator of $\sigSSISsq$, where $r_i= \Psi(\bX_{i+1}) - \Psi(\bX_i)$ for $i=1,\ldots,n-1$.
\end{proposition}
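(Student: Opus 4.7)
The plan is to write $\Psi(\bX_i) = m(T_i) + \eps_i$ where, given $T_i$, the error $\eps_i$ satisfies $\E(\eps_i\mid T_i)=0$ and $\var(\eps_i\mid T_i)=v^2(T_i)$; moreover, across strata the errors $\eps_1,\dots,\eps_n$ are conditionally independent given $(T_1,\dots,T_n)$ since the $\bX_i$ are drawn independently from $F_{\bX\mid T}(\cdot\mid T=T_i)$. Substituting into $r_i=\Psi(\bX_{i+1})-\Psi(\bX_i)$ and squaring yields
\begin{align*}
r_i^2 = \bigl(m(T_{i+1})-m(T_i)\bigr)^2 + 2\bigl(m(T_{i+1})-m(T_i)\bigr)(\eps_{i+1}-\eps_i) + (\eps_{i+1}-\eps_i)^2,
\end{align*}
so $\hsigSSISsq$ decomposes into three sums, call them $A_n$, $B_n$, $C_n$. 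I will argue that $A_n, B_n \to 0$ in probability and $C_n \to \sigSSISsq$, each weighted by $w^2(T_i)/(2(n-1))$.

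For $A_n$, the stratification construction $T_i=G_T^{-1}(U_i)$ with $U_i\sim\U((i-1)/n,i/n)$ implies $T_{i+1}-T_i$ is $O(1/n)$ uniformly on compact subsets of the support (by continuous differentiability of $G_T^{-1}$), and then continuous differentiability of $m$ yields $\bigl(m(T_{i+1})-m(T_i)\bigr)^2 = O(1/n^2)$; hence $A_n$ is $O(1/n)$ (after controlling the weights $w^2(T_i)$ by a truncation/tightness argument if $\Omega_g$ is unbounded). For $B_n$, I would condition on $\bT=(T_1,\dots,T_n)$: the cross term has conditional mean zero by independence and the zero conditional mean of the $\eps_j$, and its conditional variance is bounded by a constant times $(1/n)\max_i(m(T_{i+1})-m(T_i))^2 w^4(T_i) v^2(T_i)$, which again vanishes in probability by the same smoothness argument, giving $B_n\to 0$ by Chebyshev.

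The main term $C_n = \tfrac{1}{2(n-1)}\sum_{i=1}^{n-1}(\eps_{i+1}-\eps_i)^2 w^2(T_i)$ is where the argument of \cite{wangbrowncailevine2008} applies: conditional on $\bT$, we have $\E\bigl[(\eps_{i+1}-\eps_i)^2\mid \bT\bigr]=v^2(T_i)+v^2(T_{i+1})$ by conditional independence, so
\begin{align*}
\E(C_n\mid\bT) = \frac{1}{2(n-1)}\sum_{i=1}^{n-1}\bigl(v^2(T_i)+v^2(T_{i+1})\bigr) w^2(T_i),
\end{align*}
and continuous differentiability of $v^2$ together with $T_{i+1}-T_i=O(1/n)$ lets me replace $v^2(T_{i+1})$ by $v^2(T_i)$ at cost $O(1/n)$, reducing the conditional mean to $\tfrac{1}{n-1}\sum_{i=1}^{n-1}v^2(T_i)w^2(T_i)+o(1)$. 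Because $(T_i)$ is a proportionally stratified sample from $g_T$, this Riemann-type sum converges to $\E_{g_T}\bigl[v^2(T)w^2(T)\bigr]=\int_{\Omega_g}v^2(t)f_T^2(t)/g_T(t)\,\rd t = \sigSSISsq$. The conditional variance of $C_n$ given $\bT$ is $O(1/n)$ by independence across $i$ of disjoint index pairs (splitting the sum into odd/even indices to decouple dependence between consecutive $r_i$), so $C_n\to\sigSSISsq$ in probability; combining the three pieces gives consistency.

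The main obstacle is the second step: $\eps_{i+1}-\eps_i$ and $\eps_{i+2}-\eps_{i+1}$ share the term $\eps_{i+1}$, so the summands in $C_n$ are not independent; the cleanest fix is to split into odd and even $i$ to obtain independent blocks conditional on $\bT$ and then apply a conditional Chebyshev bound. A secondary technical nuisance is controlling the unbounded weights $w(T_i)$ when $\Omega_g=\IR$; this is handled by imposing (or verifying from context) an integrability condition such as $\E_g(v^2(T)w^2(T))^{1+\delta}<\infty$, which also underwrites the convergence of the Riemann sum for $C_n$.
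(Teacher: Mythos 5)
Your decomposition is the same one the paper uses: write $\Psi(\bX_i)=m(T_i)+\eps_i$, expand $r_i^2$ into the squared mean-difference, cross, and squared error-difference terms, use $T_{i+1}-T_i=(G_T^{-1})'(\xi_i)\,O(1/n)$ from the stratified construction to kill the first term and to replace $v^2(T_{i+1})$ by $v^2(T_i)$, and identify $\E\bigl[(\eps_{i+1}-\eps_i)^2\mid \bT\bigr]=v^2(T_i)+v^2(T_{i+1})$ so that the expectation of $\hsigSSISsq$ tends to $\E_g\bigl(v^2(T)w^2(T)\bigr)=\sigSSISsq$. Where you go further is in the concentration step: the paper stops at $\E_g(\hsigSSISsq)=\sigSSISsq+\mathcal{O}(1/n)$ and declares consistency, whereas you additionally control the fluctuation of the estimator --- conditional Chebyshev for the cross term $B_n$, and an odd/even block split to decouple the overlapping $\eps_{i+1}$ terms in $C_n$ so that its conditional variance is $O(1/n)$. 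That extra step is exactly what is needed to upgrade asymptotic unbiasedness to convergence in probability, so your sketch is, if anything, more complete than the published argument. Your remaining caveats are also well placed: both your argument and the paper's implicitly need some uniform control of the weights $w(T_i)$ (and of $(G_T^{-1})'$) near the edges of $\Omega_g$ when the support is unbounded, plus integrability of $v^2(T)w^2(T)$ to make the stratified Riemann sum converge; the paper's proof passes over these points silently, so stating an explicit moment condition as you do is a reasonable strengthening rather than a deviation.
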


Proposition~\ref{prop:SISvar:est} assumes that $G^{-1}_T$ is continuously differentiable which requires that $g_T(t)>0$ on the support of $T$ under the proposal distribution. This does not hold if there exist intervals where $g_T(t)=0$. In such a situation, we propose to divide the support of $T$ into disjoint intervals with $g_T(t)>0$ then apply Proposition~\ref{prop:SISvar:est} separately to each interval and combine them to obtain $\hsigSSISsq$.

\subsection{Connection to other IS and SS techniques}

In this subsection, we explain some connections of our proposed methods to other IS and SS techniques.

Suppose that $\bX \sim \N_d(\bzero, I_d)$. A popular strategy for constructing a proposal distribution under the multivariate normal (MVN) model is to shift its mean vector of $\bX$, that is, letting $\bX \sim \N_d(\bm{\eta}, I_d)$ under the IS distribution for some $\bm{0} \neq \bm{\eta} \in \IR^d$.
The following proposition states that this type of IS can be achieved within our SIS framework by using $T(\bX)= \btheta\T\bX$ where $\btheta$ is the normalized version of $\bm{\eta}$. Based on Proposition~\ref{prop:IS} and Remark~\ref{rem:optIS}, this result thus implies that this popular mean-shifting strategy for MVN models works well if the problem has a strong linear single-index structure based on the specific choice of shift vector $\bm{\eta}$.

\begin{proposition}[SIS in MVN models]\label{prop:meanshift}
Let $\bX \sim \N_d(\bm{0}, I_d)$ under the original distribution. Fix $\bm{0} \neq \bbeta \in \IR^d$ with $\bbeta\T\bbeta=1$.
Consider SIS with $T(\bX)= \bbeta\T\bX$. If $g_T$ is the density of $\N(c,\sigma^2)$, then $\bX \sim \N_d(c\bbeta, I_d + (\sigma^2-1)\bbeta\bbeta\T)$ in the IS scheme.
\end{proposition}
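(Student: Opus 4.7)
The plan is to exploit the orthogonal decomposition of a standard Gaussian along and perpendicular to $\bbeta$, and then substitute in the twisted distribution of $T$.

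First, I would note that under the original distribution $\bX\sim\N_d(\bzero,I_d)$, the univariate random variable $T=\bbeta\T\bX$ is $\N(0,1)$ because $\bbeta\T\bbeta=1$. Next, I would write the orthogonal decomposition $\bX = T\bbeta + \bZ$, where $\bZ=(I_d-\bbeta\bbeta\T)\bX$ is the projection of $\bX$ onto the orthogonal complement of $\bbeta$. Since $(T,\bZ)$ is jointly Gaussian and
\[
\cov(T,\bZ)=\bbeta\T(I_d-\bbeta\bbeta\T)=\bbeta\T-\bbeta\T=\bzero,
\]
$T$ and $\bZ$ are independent. Moreover, $\bZ\sim\N_d(\bzero, I_d-\bbeta\bbeta\T)$ (this is just the covariance of $(I_d-\bbeta\bbeta\T)\bX$ using idempotency of $I_d-\bbeta\bbeta\T$).

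The second step is to interpret the SIS sampling mechanism. By definition, SIS keeps the conditional distribution $F_{\bX\mid T}$ of the original model and only replaces the marginal of $T$ by $g_T=\N(c,\sigma^2)$. The decomposition $\bX=T\bbeta+\bZ$ with $\bZ\indep T$ and $\bZ\sim\N_d(\bzero, I_d-\bbeta\bbeta\T)$ expresses $F_{\bX\mid T=t}$ as a deterministic shift $t\bbeta$ plus an independent Gaussian $\bZ$. Hence under the IS distribution, we still have $\bX=T\bbeta+\bZ$ with $\bZ\sim\N_d(\bzero,I_d-\bbeta\bbeta\T)$ independent of $T$, but now $T\sim\N(c,\sigma^2)$.

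Third, I would conclude by a direct computation. Since $T\bbeta$ and $\bZ$ are independent Gaussian vectors, $\bX$ is Gaussian with
\[
\E(\bX)=c\bbeta+\bzero=c\bbeta,
\]
\[
\cov(\bX)=\sigma^2\bbeta\bbeta\T+(I_d-\bbeta\bbeta\T)=I_d+(\sigma^2-1)\bbeta\bbeta\T,
\]
which is exactly $\N_d(c\bbeta, I_d+(\sigma^2-1)\bbeta\bbeta\T)$, as claimed.

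There is essentially no difficult step here; the only subtle point is justifying that the conditional law $F_{\bX\mid T}$ is preserved under the SIS change of measure — this follows directly from how SIS is defined in Section~\ref{sec:SIS:var:opt}, together with the independence $T\indep\bZ$ that identifies $F_{\bX\mid T=t}$ with the deterministic shift $t\bbeta+\bZ$.
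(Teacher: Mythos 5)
Your proof is correct, and it takes a genuinely different route from the paper. The paper proves the result by computing the moment generating function of $\bX$ under the IS scheme: it conditions on $T$, invokes the fact $(\bX \mid T = t) \sim \N_d(\bbeta t, I_d - \bbeta\bbeta\T)$ (cited from the literature), plugs in the $\N(c,\sigma^2)$ MGF of $T$ under $g_T$, and concludes by uniqueness of MGFs. You instead derive that conditional law yourself via the orthogonal decomposition $\bX = T\bbeta + \bZ$ with $\bZ = (I_d-\bbeta\bbeta\T)\bX$ independent of $T$, observe that SIS by construction preserves $F_{\bX\mid T}$ while replacing the law of $T$ by $g_T$, and finish with the elementary fact that a sum of independent Gaussian vectors is Gaussian, reading off the mean $c\bbeta$ and covariance $\sigma^2\bbeta\bbeta\T + (I_d-\bbeta\bbeta\T)$. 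Your argument is more self-contained (no MGF computation, no external reference for the conditional distribution) and makes the mechanism of the change of measure transparent: only the component of $\bX$ along $\bbeta$ is retwisted, the orthogonal part is untouched. The paper's MGF route is a more mechanical verification, but it adapts immediately to other proposal densities $g_T$ by substituting a different MGF for $T$, whereas your normality-of-sums step is specific to Gaussian $g_T$. Your one flagged subtlety --- that SIS keeps $F_{\bX\mid T}$ fixed --- is indeed exactly what the definition of the SIS estimator in Section~2 provides, so there is no gap.
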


Proposition~\ref{prop:meanshift} implies that $\bX \sim \N_d(c\bbeta, I_d)$ if $g_T$ is chosen as the $\N(c, 1)$ density (where we recall that the original distribution $f_T$ is $\N(0,1)$), so that the previously mentioned mean-shifting strategy is a special case of IS (namely, by merely shifting the mean of $T$ instead of applying $\operatorname{SIS}^*$). If $\var(T)\ne 1$ under $g$, the dependence structure of the components in $\bX$ does change in the IS scheme.

The stratification technique proposed in \cite{glassermanheidelbergershahabuddin1999} is applied by using the normalized shift vector as the stratification direction and can also be achieved within our SIS framework using the same function $T$ and proposal distribution as in Proposition~\ref{prop:meanshift}. The combination of IS and SS is not motivated as in \cite{glassermanheidelbergershahabuddin1999}. In the latter reference, IS and SS are used to remove the variability due to the linear and the quadratic part, respectively, of $\Psi(\bX)$. In SSIS, SS is used to eliminate $\var(m(T))$, the variance captured by the systematic part of the single-index model, and then IS is used to minimize the variance contribution from $\eps_{\bX,T}$.

It is easy to see that the NPIS method proposed by \cite{neddermeyer2011} with $u=1$ (where $u$ is defined as in \cite{neddermeyer2011}) is closely connected to SIS with $T(\bX)=X_1$. It is proposed to choose $g_T(t)= m(t)f_T(t)/\mu$ in \cite{neddermeyer2011}, but by Proposition~\ref{prop:IS}, choosing $\gTopt$ defined in \eqref{eq:IS:gopt} gives an IS estimator with a smaller variance.

SIS also generalizes the IS method in \cite{arbenzcambouhofertlemieuxtaniguchi2018} in two ways. First, SIS generalizes the form of the transformation function $T$, that is, it does not assume any specific form of $T$, while the IS method in \cite{arbenzcambouhofertlemieuxtaniguchi2018} assumes that $T(\bX) = \max \{F_1(X_1),\dots,F_d(X_d)\}$, where $F_1,\dots,F_d$ are the marginal distribution functions of $\bX$. Secondly, SIS generalizes the form of the proposed density of the transformed variable, whereas the proposal density $g_T$ for the IS method in \cite{arbenzcambouhofertlemieuxtaniguchi2018}  has the form
	\begin{align*}
	g_{T}(t) = \sum_{k=1}^{M} q_k f_{T}(t \mid T>\lambda_k)
	=\sum_{k=1}^{M} q_k \frac{f_{T}(t)I_{\{t>\lambda_k\}}}{1-F_T(\lambda_k)},
	\end{align*}
for some $M \geq 1$, $t_{\inf}=\lambda_1 < \cdots < \lambda_{M}$, and $q_1,\ldots,q_M \geq 0$ such that $\sum_{k=1}^{M} q_k=1$.

The single-index structure we exploit to design our SIS and SSIS schemes is strongly related to the idea of conditional MC. In both cases, the goal is to identify a function $T$ of $\bX$ that explains much of the variability of $\Psi(\bX)$. However, with conditional MC one typically also chooses $T$ so that $m(t)=\E(\Psi(\bX)\mid T(\bX)=t)$ is known, and then estimates $\mu$ by the sample mean of the $m(T_i)$, $i=1,\ldots,n$. In our case, we do not assume or need this conditional expectation to be known in closed-form. This means we typically do not completely get rid of the $\var(m(T))$ term in~\eqref{eq:SIS:vardecomp}, but we aim to reduce it via IS; if SSIS is applied optimally, we actually do make $\var(m(T))$ vanish.

\subsection{Single-Index Importance Sampling and QMC}\label{sec:qmc:IS}

As mentioned in the introduction, further variance reduction can be achieved by performing the simulation based on quasi-random numbers (QRNs) instead of PRNs. Suppose we are given a sampling algorithm $\phi : [0,1)^{d+k} \rightarrow \IR^d$ for some $k \geq 0$ such that
$\phi(\bU) \sim f_{\bX}$ for $\bU \sim \U[0,1)^{d+k}$. For instance, when $\bX\sim \N_d(\bmu, \Sigma)$, then $k=0$ and $\phi(\bu)=\bmu+C (\Phi^{-1}(u_1),\ldots,\Phi^{-1}(u_d))\T$, where the matrix $C$ is such that $CC\T = \Sigma$ and $\Phi(x)=\int_{\infty}^x (2\pi)^{-0.5} \exp(-t^2/2)\;\rd t$ is the distribution function of the standard normal distribution. For a discussion of what the function $\phi$ is in a more general context, where $\bX$ has a dependence structure modelled by a copula other than the Gaussian copula, we refer to \cite{cambouhofertlemieux2016}.
With $\phi$ at hand, we can write $\hmuMCn=(1/n)\sum_{i=1}^n \Psi(\phi(\bU_i))$ where $\bU_i\isim\U(0,1)^{d+k}$. With QMC, we replace the $\bU_i$ with deterministic vectors $\bv_i\in[0,1)^{d+k}$ that fill the unit hypercube more evenly.  A number of constructions for such points have been proposed (see e.g., \cite[Ch.~5]{lemieux2009}); we use the Sobol' sequence of  \cite{sobol1967} for our numerical examples later on.
In order to obtain an easy-to-compute error bound, we apply a random digital shift to the $\bv_i$ to obtain multiple independent and identically distributed realizations of the randomized QMC (RQMC) estimator. Based on the digitally-shifted RQMC estimates, we can compute a probabilistic error bound in the form of a confidence interval.

It is widely accepted that the performance of QMC is largely influenced by the effective dimension of the problem, a concept first introduced in \cite{caflischrusselmorokoffowen1997}. More precisely, QMC works significantly better than plain MC if the problem has a low effective dimension; see also \cite{wangfang2003, wangsloan2005, wang2006}. One notion of effective dimension is the truncation dimension; see \cite{wangsloan2005}. Essentially, a problem has a low truncation dimension when only a small number of leading input variables are important. Recall that $\bX$ is sampled indirectly in SIS, that is, $T$ is generated first then $\bX$ is drawn from $F_{\bX | T}$. Assuming $T$ is generated using the inversion method and via the first coordinate $u_1$ of $\bu \in [0,1)^{k+d}$, the indirect sampling step of SIS transforms the problem in such a way that the first input variable accounts for $R^2\cdot100\%$ of the variance of $\Psi(\bX)$, where $R^2=\var(m(T))/\var(\Psi(\bX))$. That is, the problem has a truncation dimension of 1 in proportion $R^2$ under SIS.
Therefore, if the fit of the single-index model is good, say $R^2>0.9$, the indirect sampling step via $T$ serves as a dimension reduction technique and enhances the efficiency of QMC.

\section{Calibration stage in practice}\label{sec:calibration}

As mentioned in the introduction, we must estimate the optimal transformation function $T=T(\bX)$ and construct an approximation $\hgTopt$ for the optimal density $\gTopt$ before applying (S)SIS. We call the stage in which these two tasks are performed the \emph{calibration stage}. Furthermore, the calibrations in \eqref{eq:IS:gopt} and~\eqref{eq:SIS:gopt} require the knowledge of the conditional mean function and variance function, respectively. As these are rarely known in practice, they must be estimated in the calibration stage as well.

\subsection{Estimating the optimal transformation $T$}
In what follows, we assume that $T$ is a linear function of the components in $\bX$, i.e., $T=\bbeta\T\bX$ for some $\bbeta\in\IR^d$; note that if $\bX$ is multivariate normal, then $T$ is univariate normal and sampling from $\bX\mid T$ is straightforward. To find $\bbeta$ that maximizes $R^2$, we use the average derivative method of \cite{stoker1986}, which essentially allows us to estimate $\bbeta$ as if we met the assumptions of a linear regression. That is, we sample independent realizations $\Psi(\bX_i)=:\Psi_i$ for $i=1,\dots,n_1$ (say, $n_1=1000$) and compute the sample covariance matrix $\Sigma_{\bX,\bX}$ as well as the sample cross covariance of $\bX_1,\dots,\bX_{n_1}$ and $(\Psi_1,\dots,\Psi_{n_1})$, say $\Sigma_{\bX,\Psi}$ to obtain

$$\hat{\bbeta} = \Sigma_{\bX,\bX}^{-1} \Sigma_{\bX,\Psi}.$$

In some applications, we may use only a subset of the components in $\bX$; in later examples, for instance, we only use the systematic risk factors in a credit model to build our transformation $T$. Sometimes one may even not need to estimate $\bbeta$, for instance, if it is clear that the $d$ components in $\bX$ are equally important, one can simply set $\bbeta = (1/\sqrt{d},\dots,1/\sqrt{d})$; see Section~\ref{section:credit:gauss} for an example.

\subsection{Finding the optimal density}

The calibration in~\eqref{eq:IS:gopt} requires the knowledge of the conditional second moment function $m^{(2)}(t) = \E(\Psi^2(\bX)  \mid T=t)$ for all $t\in \Omega_T$, which, of course, is not known; similarly, the conditional variance function $v^2$ required for the calibration in~\eqref{eq:SIS:gopt} is not known either. We now describe how to calibrate~\eqref{eq:IS:gopt} in practice; the calibration of~\eqref{eq:SIS:gopt} can be done similarly.

Our first ingredient is the construction of an estimate of $m^{(2)}(t) = \E(\Psi^2(\bX)  \mid T=t)$ for all $t\in \Omega_T$; we suggest using plain MC for this purpose. To this end, let $t_1<\dots<t_M$ be knots at which the function $m^{(2)}$ is to be estimated (e.g., $M=20$ equally spaced points in the relevant range). Choose some small pilot sample size $\npil$ (for example, 5\% of the total sample size $n$). For each $t_j$, sample $\npil$-many realizations from $\bX \mid T=t_j$ and estimate $m^{(2)}(t_j)$ by its empirical equivalent for $j=1,\dots,M$. Then utilize smoothing splines (see, for example, \cite{reinsch1967}) and only those $t_j$ associated with a positive estimate to construct an estimate $\hat{m}^{(2)}$ for all $t\in \Omega_T$; for those $t$ where $\hat{m}^{(2)}(t)\leq 0$, one can either leave them as $\hat{m}^{(2)}(t)=0$ (which may lead to bias as discussed below) or set $\hat{m}^{(2)}$ to be some positive function (e.g., the error function) that resembles the lower tail of $\eps$.

Having constructed an estimate for $m^{(2)}$,  we can set $\hgTopt\propto \sqrt{\hat{m}^{(2)}(t)}f(t)$ for $t\in\IR$. However, $\hgTopt$ rarely belongs to known parametric families of distributions that are easily sampled from. One can use numerical techniques such as the NINIGL algorithm to approximate the quantile function of a distribution given its unnormalized density; see \cite{hormannleydold2003}. This approach, however, has three drawbacks: i) sampling from a numerically constructed density is time-consuming and can be prone to numerical problems; ii) the normalizing constant needs to be estimated, and iii) bias can occur when $\hgTopts$ does not have the same support as $\gTopts$, which in turn happens when $\hat{m}^{(2)}(t)=0$ even though $m^{(2)}(t)\not=0$ for some set $D$ with $\int_D f(t)\;\rd t>0$.

The third drawback can be alleviated if we can define $\hat{m}^{(2)}(t)$ to be positive whenever $m^{(2)}(t)$ is (for example, by assuming some lower and upper tail behaviour). Furthermore, recall from Proposition~\ref{prop:SIS} that \eqref{eq:SIS:gopt} gives a biased estimator if $\P_f(v^2(T)=0, \; m(T) \neq 0)>0$ which in some cases can be debiased. For instance, if $v(t)>0$ for all $t\in\Omega_t$, but the estimated $\hat{v}(t)=0$ for $t\geq t_{\max}$ for some $t_{\max}\in\IR$ and $m(t)=c$ for some constant $c$ for $t\geq t_{\max}$ (for instance, if $\mu$ is a probability, then typically $m(t)=c=1$ for $t\geq t_{\max}$). If $\hmuSSISn$ is constructed using $\hat{v}$, we find
\begin{align*}
\E(\hmuSSISn) &= \int_{-\infty}^{t_{\max}} m(t)f_T(t)\;\rd t = \mu - \int_{t_{\max}}^{\infty} m(t)f_T(t)\;\rd t \approx \mu - c\P_f(T>t_{\max});
\end{align*}
$\hmuSSISn$ can therefore be debiased by adding $c\P_f(T>t_{\max})$.

The second drawback can be addressed by using weighted IS (so that the normalizing constant cancels out); see \cite[Section~4.5]{lemieux2009}. Alternatively, the normalizing constant can be estimated as follows: Let $\hgToptu(t)= \sqrt{\hat{m}^{(2)}}g(t)$ denote the unnormalized density, and $T_1,\dots,T_n\isim \hgTopt$ (obtained, for instance, using the NINIGL algorithm). Now construct an estimate of the density of $T_1,\dots,T_n$, such as the kernel density estimator, and denote this estimated density by $\hat{h}$; note that $\hat{h}$ is normalized and that each of $\hat{h}(T_i)/\hgToptu(T_i)$ for $i=1,\dots,n$ is an estimator for the normalizing constant. As such, we suggest using the sample median of $\{\hat{h}(T_1)/\hgToptu(T_1),\dots,\hat{h}(T_n)/\hgToptu(T_n) \}$ as an estimator for the normalizing constant.

The first drawback, that is, the construction of an approximation to the quantile function of $\hgTopt$ being both slow and potentially prone to numerical problems, is most severe. Below, we propose an alternative method, namely by setting $\hgTopt(t) = 1/\sigma f( (t-k)/\sigma)$ for carefully chosen $k\in\IR$ and $\sigma>0$. In other words, we suggest using a location-scale transform of the original density as proposal density and will therefore call this method $\SIS^{c,\sigma}$. While this procedure does require estimation of $k$ and $\sigma$, it does not suffer from any of the three aforementioned problems: i) if we can sample from $f$, we can also sample from $f( (t-k)/\sigma)/\sigma$; ii) there is no normalizing constant or density to be estimated; iii) $f$ and $f( (t-k)/\sigma)/\sigma$ have the same support, so that the resulting estimator is unbiased.

The idea behind using a location-scale transform arises from the observation that in many practical examples (as will be seen later) the optimal density has roughly the same shape as the original density. As such, we try to find $k$ and $\sigma$ so that $1/\sigma f( (t-k)/\sigma)$ is approximately $\gTopt(t)$. Denote again by $\hgToptu(t)= \sqrt{\hat{m}^{(2)}(t)}f(t)$ the unnormalized, estimated optimal density and assume that the mode of $f_T$ is at zero (otherwise, shift accordingly).  Now find $k^* = \argmax_t \hgToptu(t)$ numerically; this makes sure that the theoretical and approximated densities have (roughly) the same mode, thereby both sample from the ``important region''. Having estimated $k^*$, the next step is to compute $\sigma$ such that it minimizes the variance of the resulting estimator. More precisely, given a sample $T_1,\dots,T_{\npil}$ from $f$,
we can estimate the variance of the estimator for a given $\sigma$ as follows: Set $\tilde{T}_i=k^* + \sigma T_i$ and $w_i =\frac{f(\tilde{T}_i)}{f( (\tilde{T}_i-k^*)/\sigma)/\sigma}$ and sample $\bX_i \mid \tilde{T}_i$ for $i=1,\dots,\npil$. The second moment of the IS estimator (written as a function of the scale $\sigma$) is then
\begin{align}\label{eq:Vsigma}
 V(\sigma) =  \sum_{i=1}^{\npil} \Psi(\bX_i) w_i^2, \quad \sigma>0.
 \end{align}
We can now solve $\sigma^* = \argmin_{\sigma>0} V(\sigma)$ numerically. Note that due to the nature of a location-scale transform, we only need to sample $T_1,\dots,T_{\npil}$ once. Intuitively, $k^*$ shifts the density to the important region, while $\sigma^*$ scales it appropriately. If computing $V(\sigma)$ is very time consuming (for example, when the sampling of $\bX \mid T$ is complicated), one can set $\sigma^*=1$; the resulting method is then called $\SIS^{\mu}$ instead of $\SIS^{\mu,\sigma}$.

\begin{algorithm}[Calibration and estimation stage for estimating $\mu$ via $\SIS^{\mu,\sigma}$]\label{alg:SIS:full}
Given knots $t_1,\dots,t_{\npil}$, a total pilot budget $\ntot$ and knot-sample size $\nknot$, target sample size $n$, estimate $\mu$ via:
\begin{enumerate}

\item \emph{Estimation of the direction vector.}
\begin{enumerate}
\item Sample $\bX_i\isim f_{\bX}$ for $i=1,\dots,\npil$ and compute $\Psi(\bX_i)=:\Psi_i$, $i=1,\dots,\npil$.
\item Compute  $\Sigma_{\bX,\bX}$ and $\Sigma_{\bX,\mathbf{\Psi}}$and set $\hat{\bbeta} = \Sigma_{\bX,\bX}^{-1} \Sigma_{\bX,\Psi}$.
\end{enumerate}

\item \emph{Estimation of $c^*$ and $\sigma^*$.}
\begin{enumerate}
\item\label{step:condsampling1}  For each $k=1,\dots,\npil$, sample $\bX_{j,k}\isim f_{\bX\mid T}(\cdot \mid t_k)$ for $j=1,\dots,\nknot$.
\item Utilize smoothing splines\footnote{In the case when $\Psi$ is an indicator, use a logistic regression (available, for instance, via the \R\ function \texttt{glm())} with $\nknot = 1$ instead.} through  $(t_k, (1/\nknot)\sum_{j=1}^{\nknot} \Psi(\bX_{j,k})^2)$, $k=1,\dots,\npil$, to construct an estimate for $\hat{m}^{(2)}(t)$ for $t\in\IR$.
\item Find $c^* = \argmax_t \sqrt{\hat{m}^{(2)}(t)} f(t)$ numerically.
\item\label{step:condsampling2} Sample $T_1,\dots,T_{\npil}\isim f$ and find $\sigma^* = \argmin_{\sigma>0} V(\sigma)$ with the function $V$ from~\eqref{eq:Vsigma} numerically.
\end{enumerate}

\item \emph{Estimation of $\mu$.}
\begin{enumerate}
\item Sample $T_1',\dots,T_n'\isim f$, set $T_i = c^* + \sigma^* T_i$ and compute $w_i =\frac{f(T_i)}{ f( (T_i-c^*)/\sigma^*)/\sigma^*}$ for $i=1,\dots,n$.
\item\label{step:condsampling3}  Sample $\bX_i\isim f_{\bX\mid T}(\cdot \mid T_i)$ for $i=1,\dots,n$.
\item Return $\hmuSISn = (1/n) \sum_{i=1}^n \Psi(\bX_i)w_i$.
\end{enumerate}

\end{enumerate}
\end{algorithm}

\begin{remark}
\begin{enumerate}
\item Algorithm~\ref{alg:SIS:full} can be easily adapted to accommodate quasi-random numbers and stratification, as will be discussed in the next section.
\item The effort for the conditional sampling needed in Steps~\ref{step:condsampling1}, \ref{step:condsampling2} and~\ref{step:condsampling3} is problem specific -- for some problems, samples of $\bX \mid T=t_1$ can be easily transformed to samples from $\bX \mid T=t_2$ for $t_1\not=t_2$, making these steps very fast; in some other problems, the conditional sampling is more involved.
\item Our proposed SIS method can also be combined with other VRTs. For instance, in Section~\ref{section:credit:t}, we combine conditional MC (CMC) and SIS to estimate loss probabilities of a credit portfolio whose dependence is governed by a $t$-copula.
\end{enumerate}
\end{remark}

\section{Numerical Experiments}
\label{sec:numexp}

In this section, we perform an extensive numerical study to demonstrate the effectiveness of our proposed methods. We start with a simplistic linear model example, in which case calibration of the optimal densities can be done easily. This allows us to investigate the effect of replacing $\gTopt$ by $\hgTopt$. In Section~\ref{section:credit:gauss}, we apply our SIS and SSIS schemes to a credit portfolio problem under the Gaussian copula model studied by \cite{glassermanli2005}. The same financial problem but this time using a more complicated $t$-copula model is studied in Section~\ref{section:credit:t}.   All computations were carried out in \R; see \cite{Rsystem}.
\subsection{Linear Model Example}\label{section:linearmodel}

Let $L = \alpha  T + \eps_T$ where $T\sim \N(0,1)$, $\eps_T \mid T \sim \N(0, s^2)$ and $\alpha^2+s^2 = 1$. $L$ has a single index structure when $\alpha^2\approx 1$ since $R^2 = \var(m(T))/\var(L)=\var(\alpha T)=\alpha^2$.

Assume interest lies in estimating the probability $p_l=\P(L>l)=\bar{\Phi}(l)=\E(\I_{\{L>l\}})$ for some large $l$; note that we can approximate the true value of $p_l$ efficiently with high precision since $L\sim\N(0,1)$. Furthermore it is easily seen that $p_l(t) = \P(L>l \mid T = t) =  \bar{\Phi}\left( (l-\alpha t) /s\right)$ for $l,t\in\IR$. Since the integrand $\Psi$ in this setting is an indicator, we find from Proposition~\ref{prop:IS} that $\gTopt(t) \propto \sqrt{p_l(t)}f_T(t)$.

Unlike in this simplistic setting, $p_l(t)$ for $t\in\IR$ is unknown in practice as discussed in Section~\ref{sec:calibration}; thus, this setting serves as an excellent example to also compare whether approximating $\gTopt$ by $\hgTopt$ has a significant effect on the accuracy of the estimators. Sampling from the true optimal densities is performed using the \R\ package \texttt{Runuran} of \cite{leydoldhoermann2020}. We consider the methods $\SIS^{**}$ (constructed using known $p_l(t)$), $\SIS^{*}$ (approximated $p_l(t)$ and NINGL), $\SIS^{\mu}$ and $\SIS^{\mu,\sigma}$

For the two settings of $\alpha^2 \in\{0.7, 0.99\}$ (corresponding to a weaker and stronger single index structure), we estimate $p_l$ for $l\in\{3,4,\dots,7\}$ using the five aforementioned methods. For each value of $l$, the optimal density is calibrated separately. In all examples, we use a sample size of $n=10^6$ and a pilot sample size of $5\times 10^4$. We repeat the experiment 200 times.

Figure~\ref{fig:lm:densdur} displays on the left the optimally calibrated and approximated IS densities. The true optimal density is bell shaped, so it is well approximated by a normal density. It can be confirmed from the plot that in this case, all IS densities seem to cover the important range. The right of Figure~\ref{fig:lm:densdur} displays a boxplot of run-times needed to estimate $p_l$; note that the run-time does not depend on $\alpha$ or $l$. This plot, however, should be interpreted with caution as it highly depends on how the pilot runs are implemented.

\begin{figure}
\centering
\includegraphics[width=0.48\textwidth]{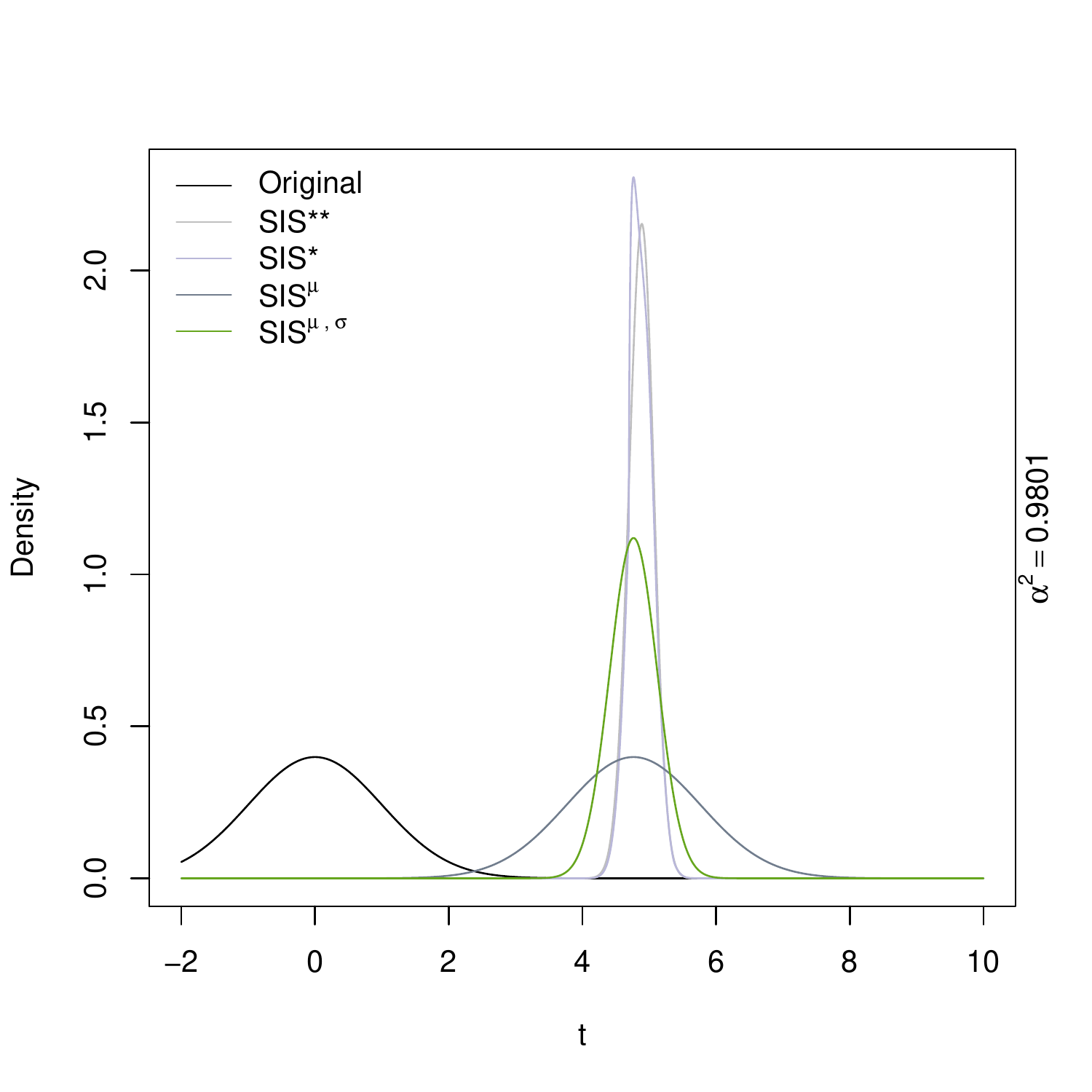}
\includegraphics[width=0.48\textwidth]{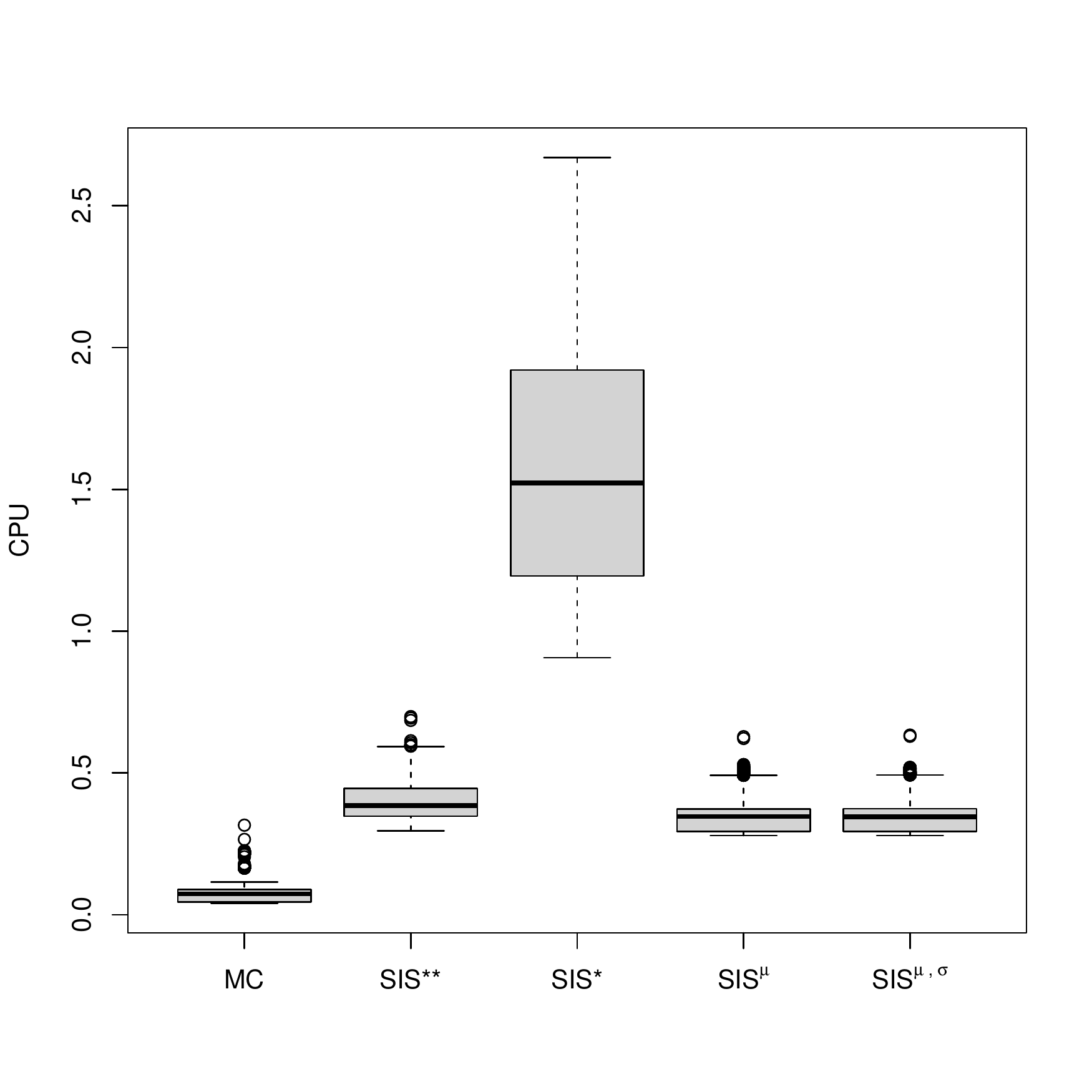}
\caption{Left: Calibrated densities for $\alpha=0.99$, $l=5$. Right: Run-times for each method including pilot runs.}
\label{fig:lm:densdur}
\end{figure}

Figure~\ref{fig:lm:results:pl} displays mean relative errors; recall that we know $p_l$ here. The relative errors for the different methods are similar, though $\SIS^{\mu,\sigma}$  seems to give smallest errors. A possible explanation might be that the simplicity of that method (e.g., in terms of the support) relative to numerically constructing the optimal density via NINGL might outweigh the benefit of the latter having slightly more theoretical support. Furthermore, note that the IS methods perform much better when $R^2=\alpha^2$ is larger, i.e., when the single index structure is strong, as expected.

\begin{figure}
\centering
\includegraphics[width=0.48\textwidth]{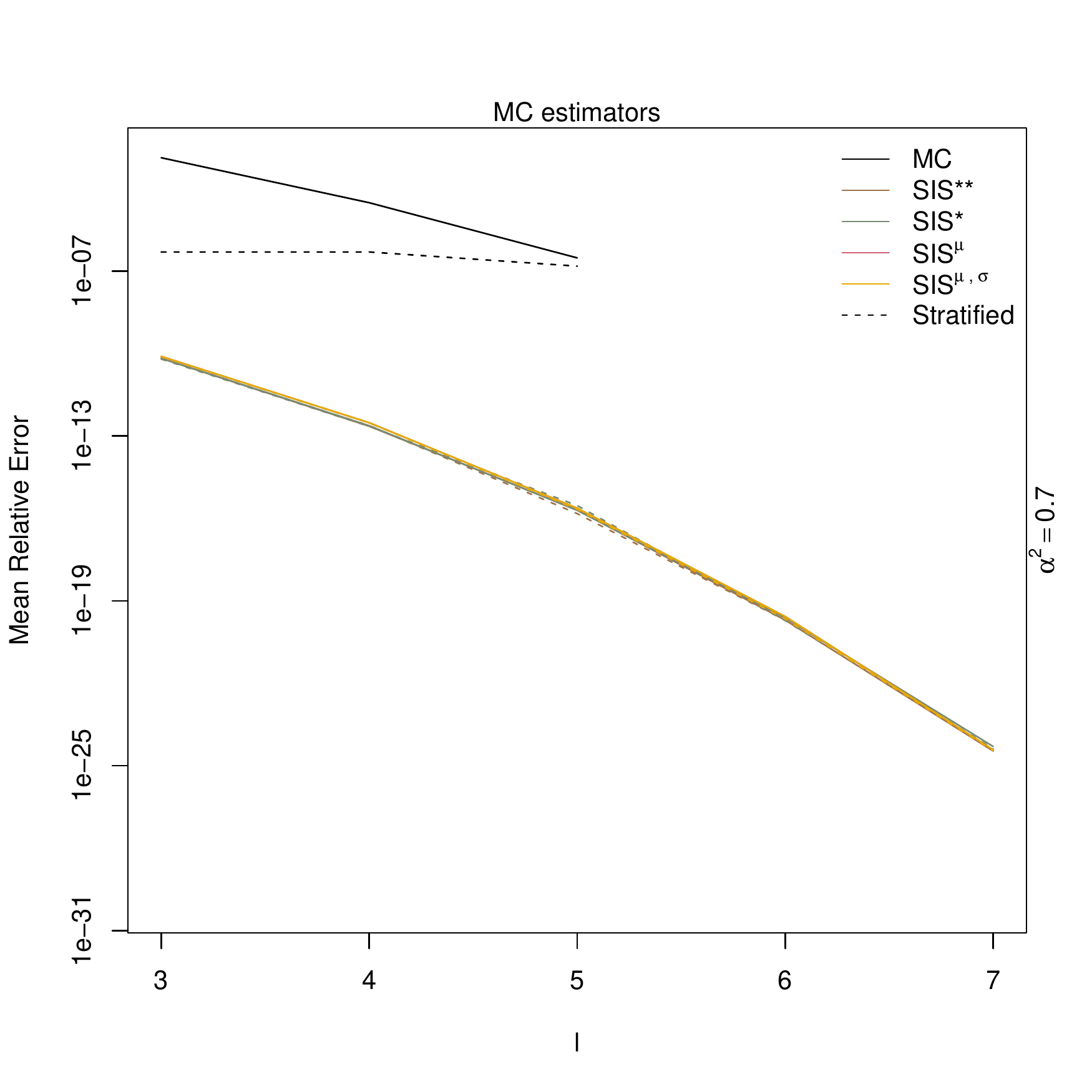}
\includegraphics[width=0.48\textwidth]{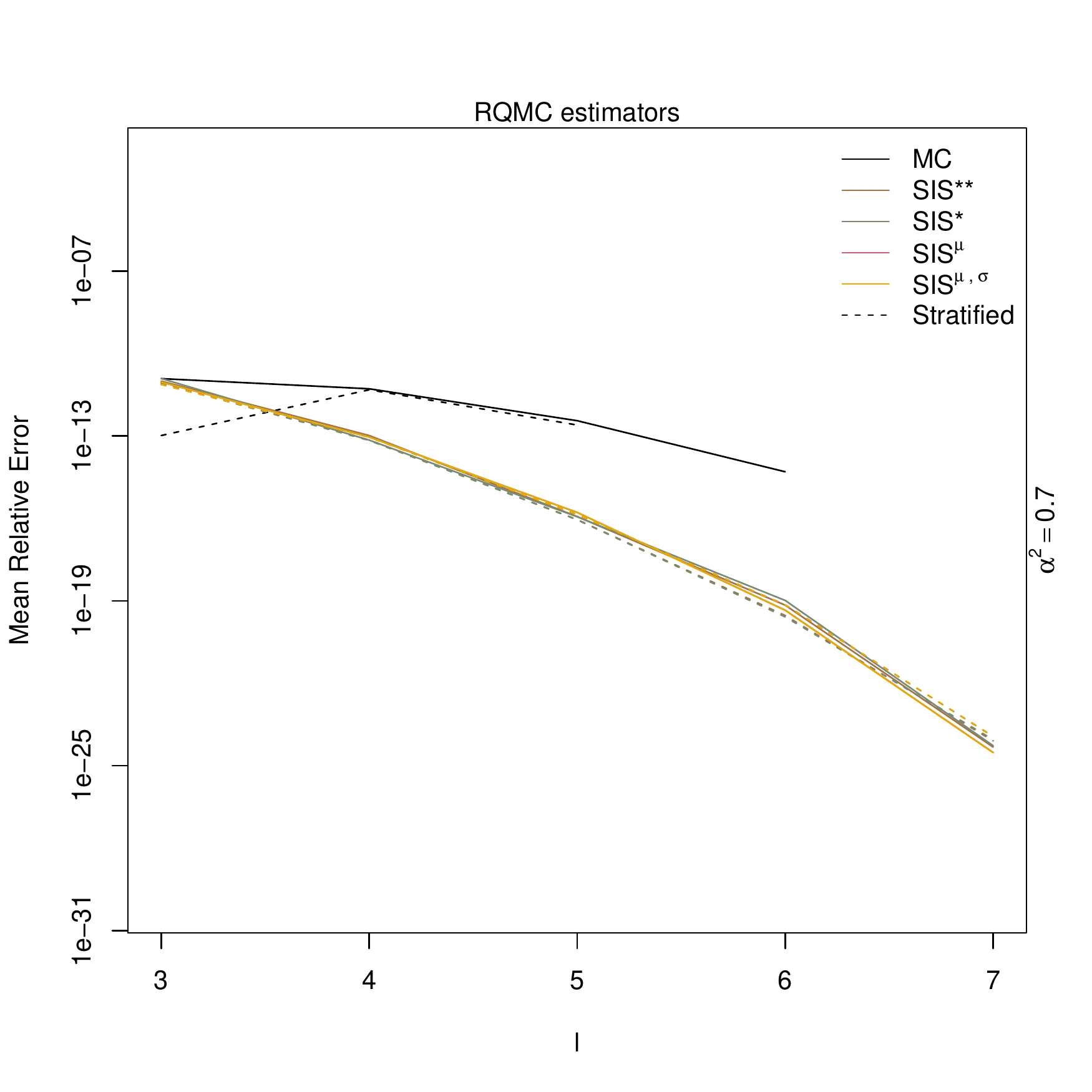}
\includegraphics[width=0.48\textwidth]{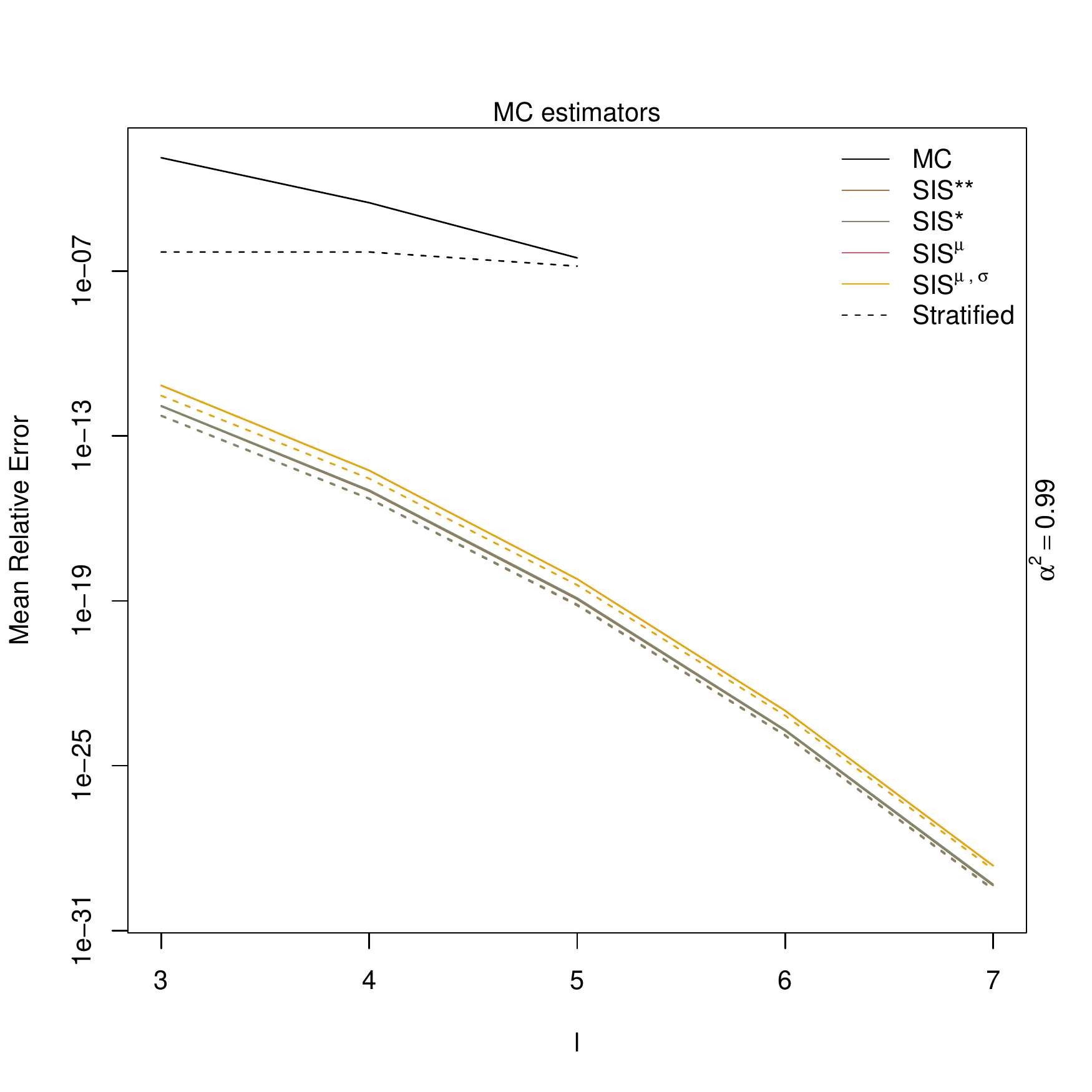}
\includegraphics[width=0.48\textwidth]{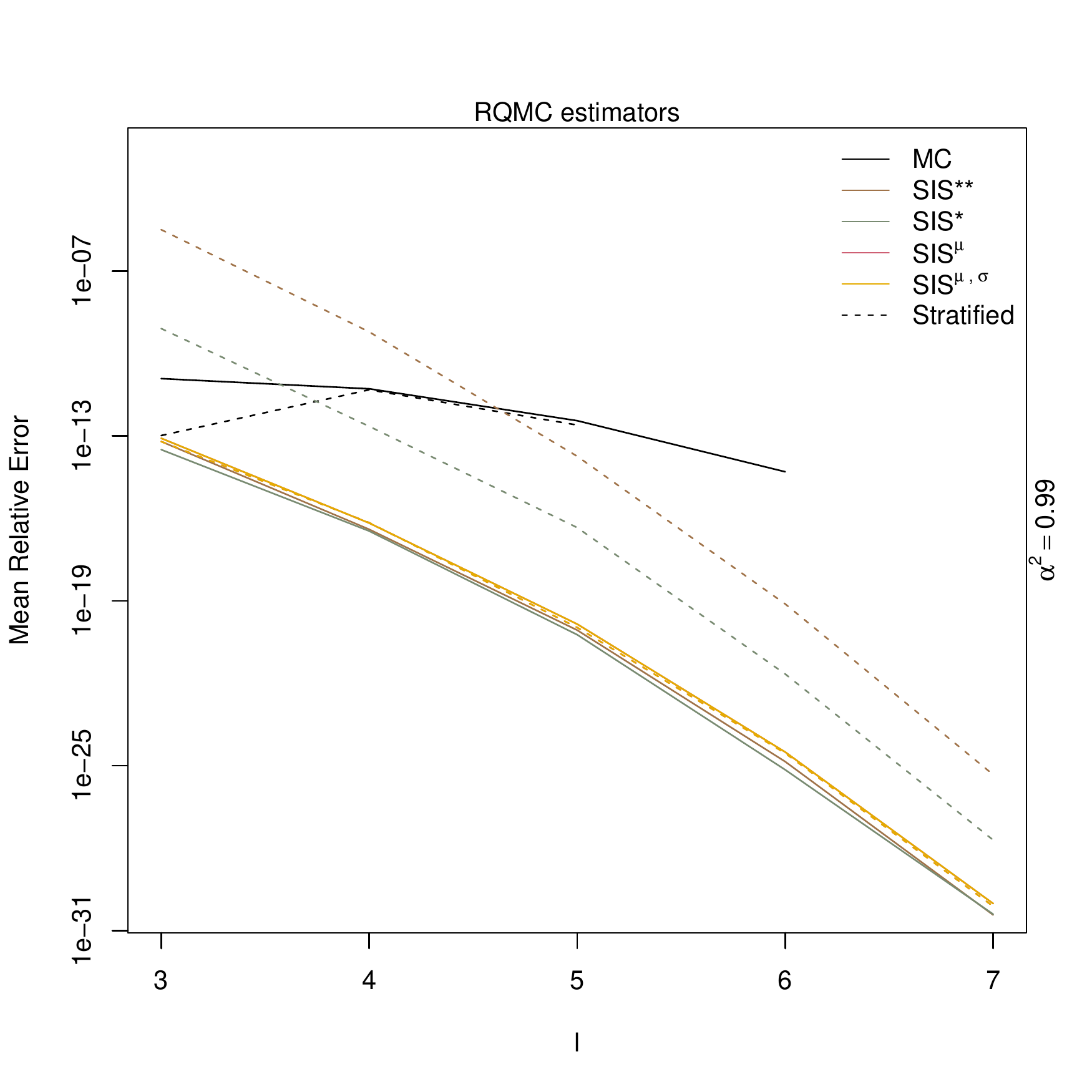}
\caption{Mean relative errors when using pseudo-random numbers (left) and quasi-random numbers (right) for
$\alpha^2=0.7$ (top) and $\alpha^2=0.99$ (bottom).}
\label{fig:lm:results:pl}
\end{figure}

\begin{figure}
\centering
\includegraphics[width=0.48\textwidth]{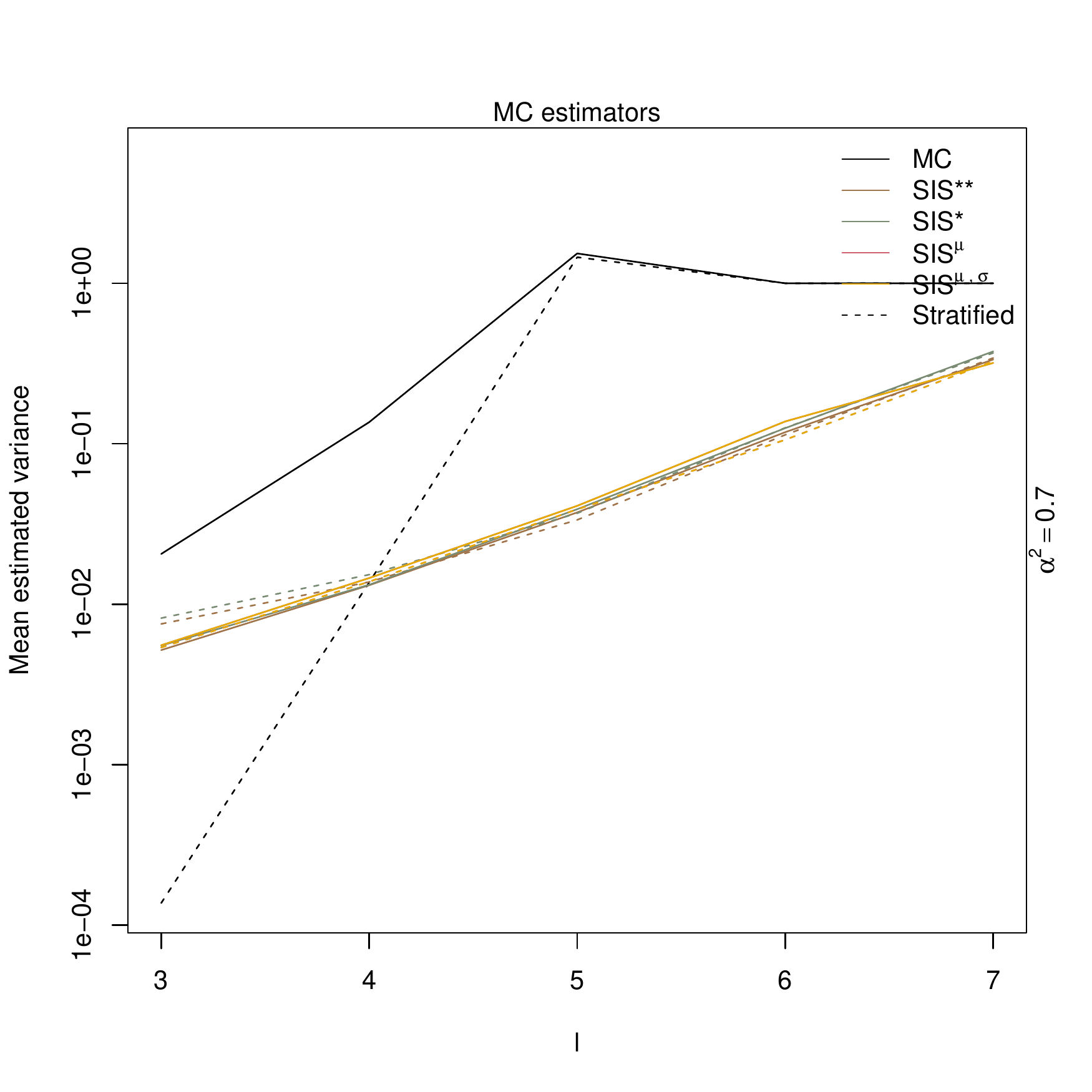}
\includegraphics[width=0.48\textwidth]{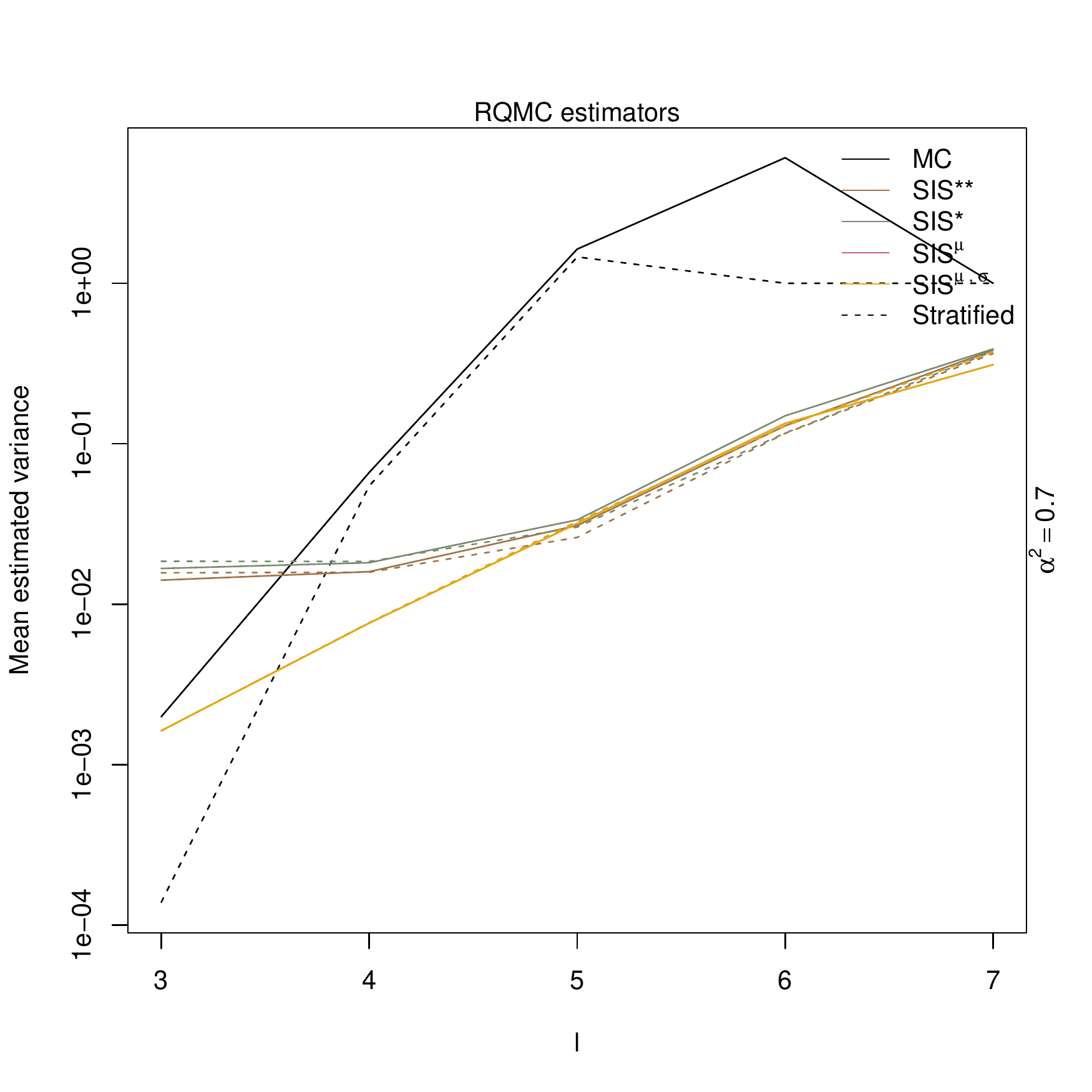}
\includegraphics[width=0.48\textwidth]{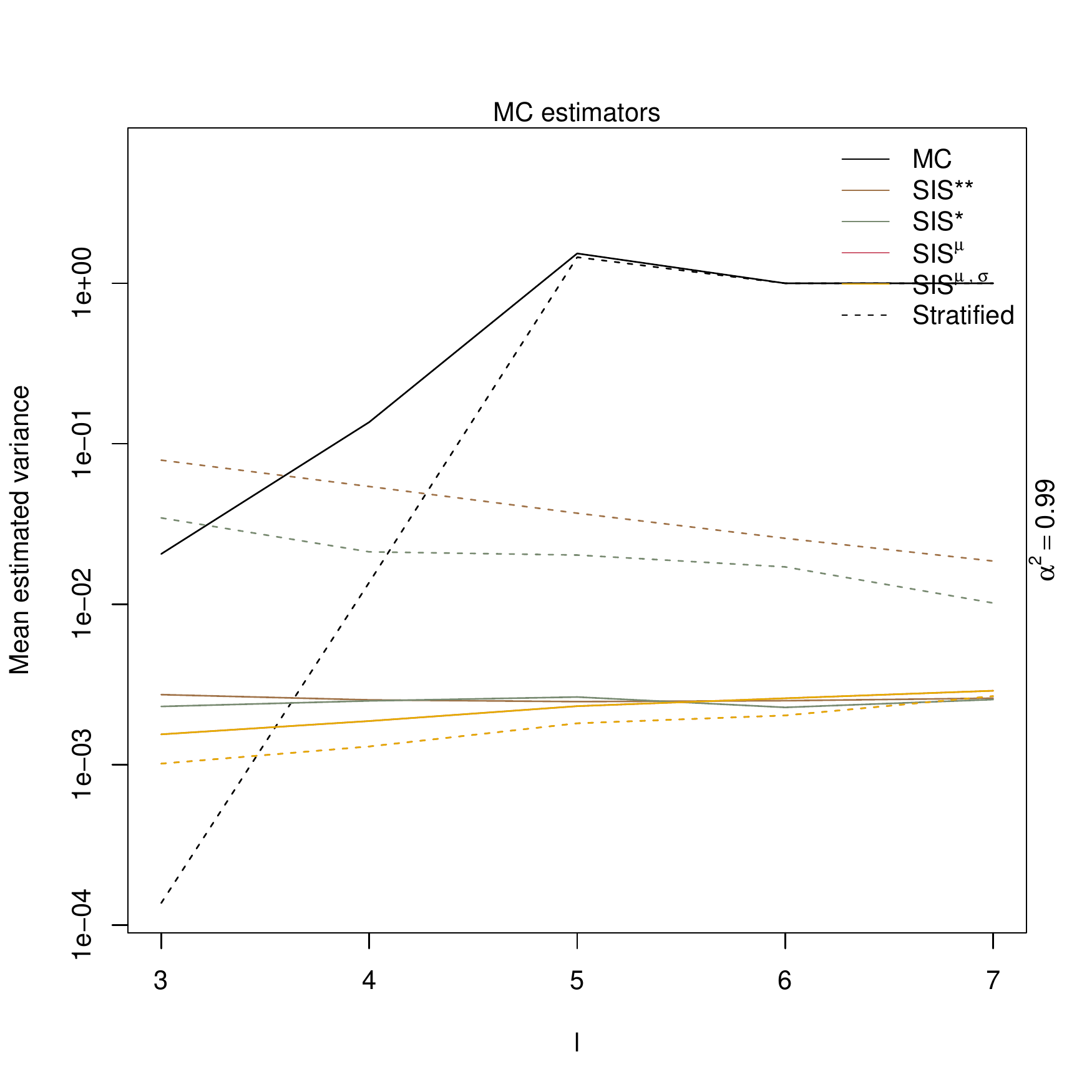}
\includegraphics[width=0.48\textwidth]{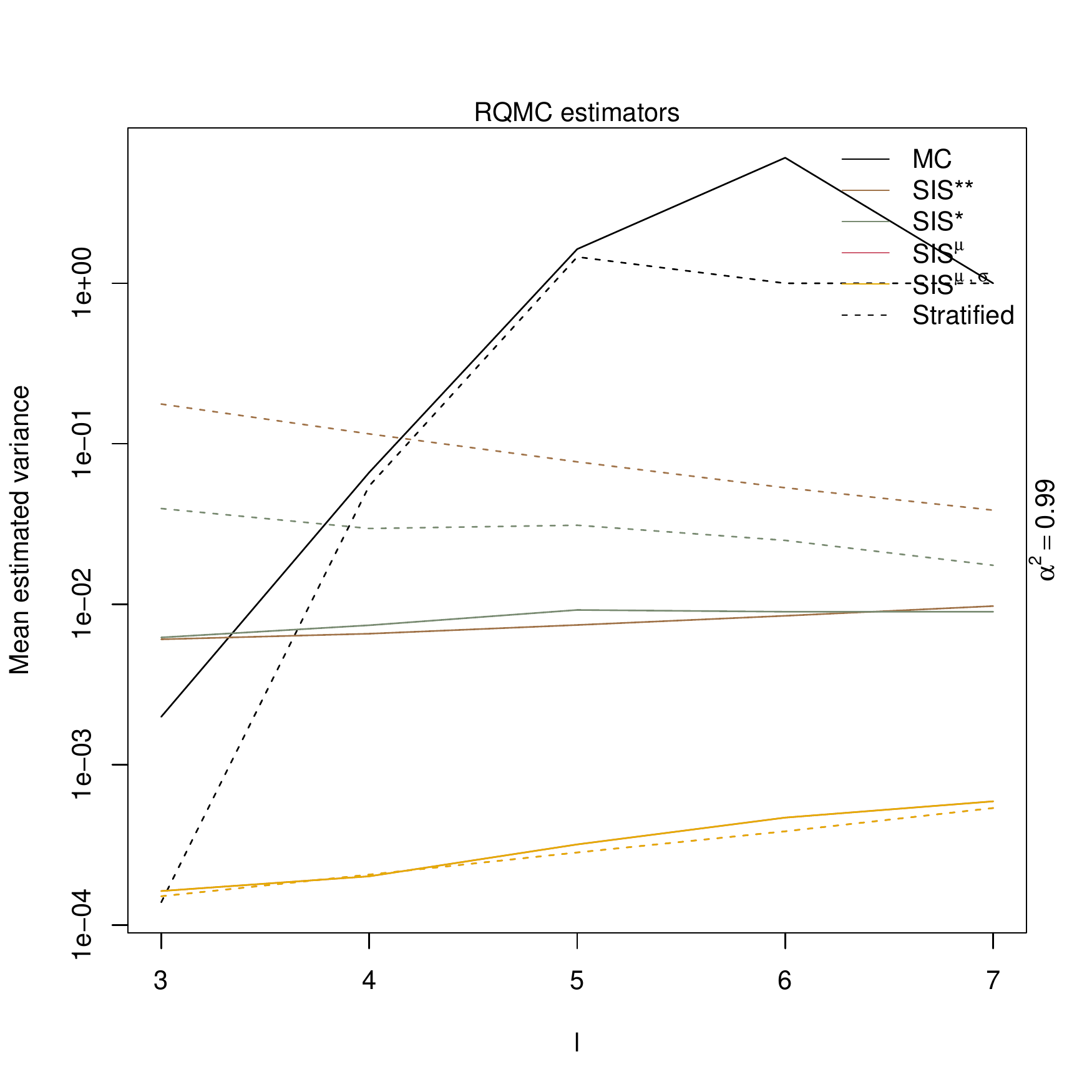}
\caption{Mean estimated variance when using pseudo-random numbers (left) and quasi-random numbers (right) for
$\alpha^2=0.7$ (top) and $\alpha^2=0.99$ (bottom).}
\label{fig:lm:results:pl}
\end{figure}

\subsection{Loss Distribution of a Credit Portfolio}\label{section:credit:gauss}
In this section, we study the effectiveness of the proposed methods for a credit portfolio problem studied in \cite{glassermanli2005}, where the goal is to estimate the probability of large portfolio losses under a normal copula model. We compare our proposed methods to the IS technique of Glasserman and Li, to which we refer to as G$\&$L IS.
\subsubsection{Problem Formulation}\label{section:credit:gauss:pars}
Suppose that $Y_k$ denotes the default indicator of the $k$th obligor with exposure $c_k$ and a default probability of $p_k$ for $k=1,\ldots,h$. The incurred loss is then $L = \sum_{k=1}^h c_k Y_k$. Let
$$Y_k = \I_{\{X_k > \Phi^{-1}(1-p_k)\}},\quad X_k = a_{k1} Z_1 + \cdots + a_{kd} Z_d + b_k \eps_k \sim \N(0,1),\quad k=1,\dots,h,$$
where
$$ (Z_1,\ldots,Z_d) \sim \N_d(\bm{0},I_d), \quad\eps_1,\dots,\eps_h\isim\N(0,1), \quad \sum_{j=1}^h a^2_{kj} \leq 1,\quad b_k = \sqrt{1- \sum_{j=1}^h a^2_{kj} }.$$
The $a_{kj}$ represent the $k$th obligor's factor loadings for the $d$ risky systematic factors; the choice of $b_k$ ensures $X_k\sim\N(0,1)$. Our goal is to estimate $P(L>l)$ for small $l>0$.

As in \cite{glassermanli2005},  we consider a portfolio with $h=1\ 000$ obligors in a 10-factor model (i.e. $d=10$). The marginal default probabilities and exposures are $p_k = 0.01 \cdot (1+\sin(16 \pi k /h))$ and $c_k = (\left \lceil{{5k/h}}\right \rceil)^2$ for $k=1,\ldots, h$, respectively. The marginal default probabilities vary between $0 \%$ and $2\%$ and the possible exposures are 1, 4, 9, 16 and 25, with 200 obligors at each level. The factor loadings $a_{kj}$'s are independently generated from a $\U(0,1/\sqrt{d})$.
Letting $\bZ=(Z_1,\ldots,Z_d)\T$ and $\bm{\eps}=(\eps_1,\ldots,\eps_h)\T$, we write $L=L(\bZ,\bm{\eps})$,
i.e.,  the vector $\bX$ to which we have referred throughout this paper is given by $\bX=(\bZ,\bm{\eps})$ for this example.
We investigate whether or not $L$ has a single-index structure. Let $T= \btheta\T \bZ$ where $\btheta \in \IR^d$ such that $\btheta\T \btheta=1$, so $T \sim \N(0,1)$. We estimate $\btheta$ that maximize the fit by using the average derivative method of \cite{stoker1986}. The estimated $\btheta$ has almost equal entries close to $\sqrt{1/d}$. This makes intuitive sense, as each component of $\bZ$ is likely to be equally important because the factor loadings are generated randomly.
The left side of Figure~\ref{fig:scatter:optdens:gcop} shows the scatter plot of $(T,L)$.
The figure reveals the single-index model fits $L$ well even in the extreme tail, implying SIS based on this choice of $T$ will give substantial variance reduction.
The right side of Figure~\ref{fig:scatter:optdens:gcop} displays the original density of $T$, the optimally calibrated $\SIS^*$ density as well as the estimated function $p_l(t)$. Note that the optimally calibrated density's mode substantially differs from the original one.

\begin{figure}[h]
	\centering
	\includegraphics[width = 0.48\textwidth]{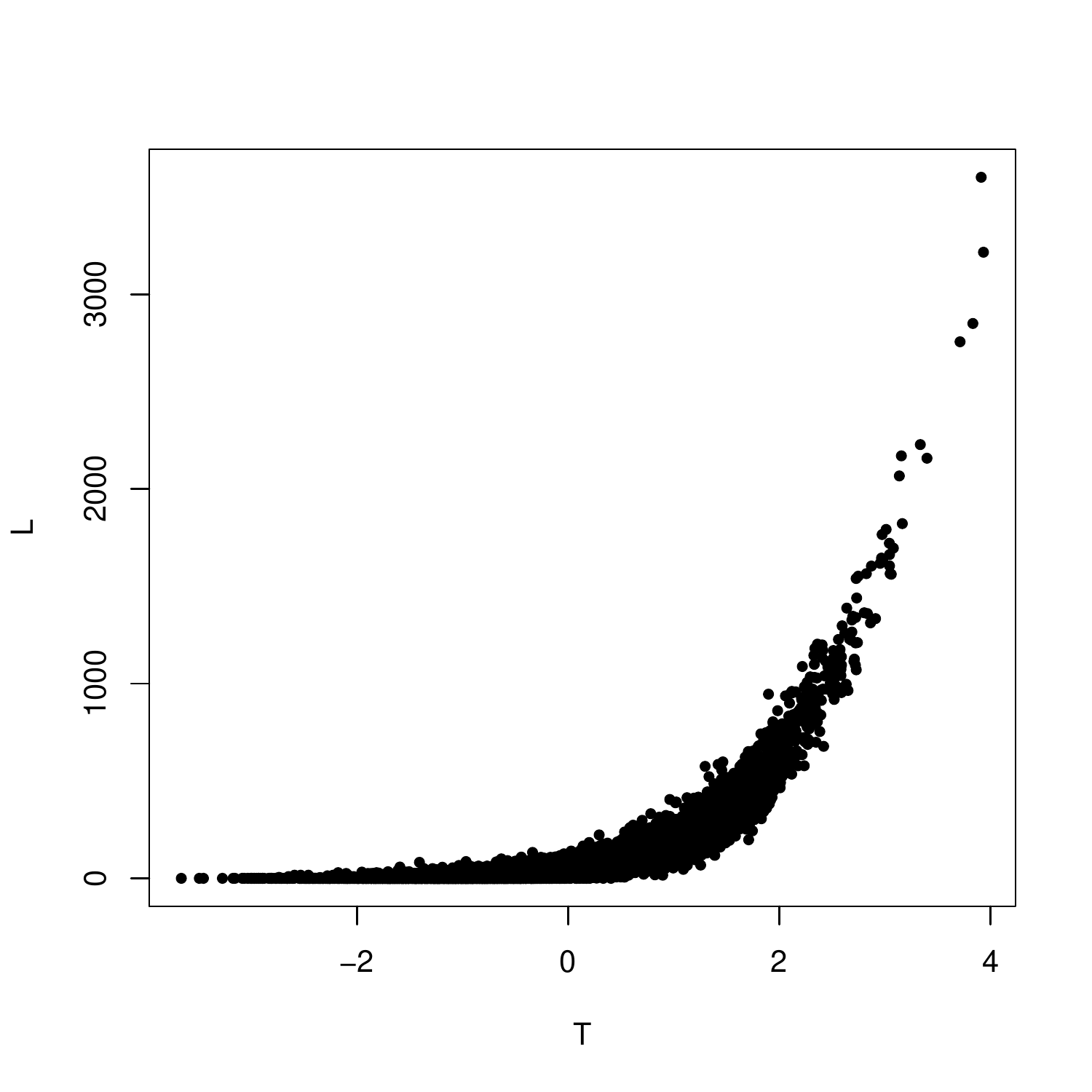}
	\includegraphics[width = 0.48\textwidth]{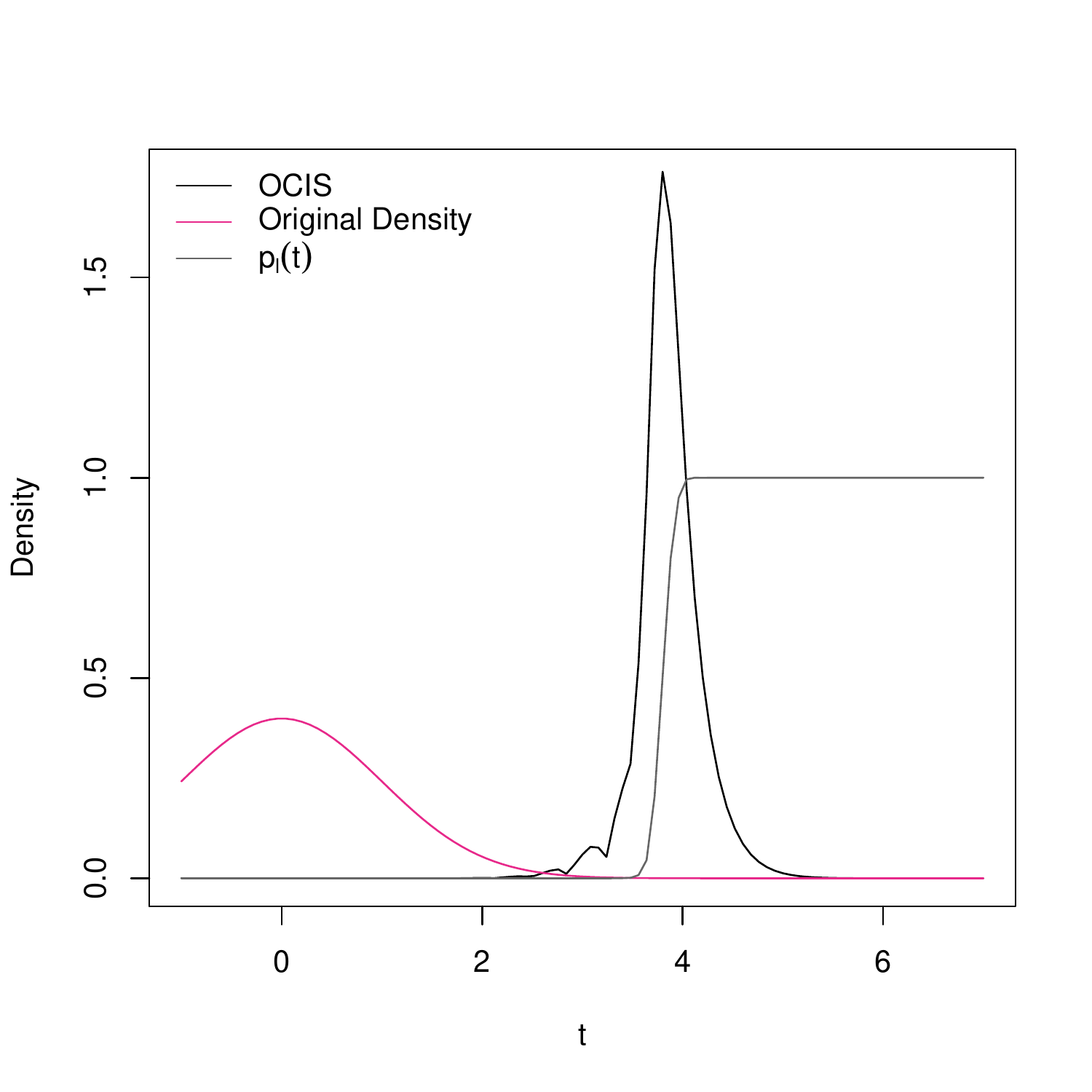}
	\caption{Plot of Transformed variable ($T$) vs Portfolio Loss ($L$) based on 10\ 000 observations (left) and OCIS density calibrated to $l=3000$ (right).}
	\label{fig:scatter:optdens:gcop}
\end{figure}

\subsubsection{Proposed estimators}
The method of \cite{glassermanli2005} consists of a two-step procedure. In a calibration stage, an optimal mean vector $\bmu\in\IR^d$ is found by solving an optimization problem minimizing the variance of the resulting IS estimator. Next, one samples $\bZ\sim\N_d(\bmu, I_d)$ and computes the conditional default probabilities $p_k(\bZ)=\P(Y_k=1\mid \bZ)=\Phi( (\ba_k\T\bZ - x_k)/b_k)$, which enter another optimization problem used to find a number $\theta\in\IR$ so that $q_k(\theta, \bZ)$ are variance minimizing default probabilities. Given $\bZ$, we know that $Y_1,\dots,Y_h$ are independent and can therefore easily sample the loss via $L=\sum_{k=1}^h c_k \I_{\{U_k \leq q_k(\theta(\bZ)\}}$ where $(U_1,\dots,U_h)\sim \U(0,1)^h$. Finally, the estimator $\I_{\{L>l\}}\cdot w(\bZ, L)$ where $w$ denotes the IS weight function is an unbiased estimator.

Our method $\SIS^{\mu,\sigma}$ proceeds as described in \ref{alg:SIS:full}; $\SIS^{\mu}$ omits sets the scale to unity while the $\SSIS$ methods also stratify. Once $\bZ\mid T$ is sampled, we sample $Y_k$ from $p_k(Z)$ independently. We also include $\SIS^*$ and $\SSIS^*$, where the function $p_l(t)$ is estimated as before and the quantile function of the optimal distribution is estimated via the NINiGL algorithm, in our experiments; see also Figure~\ref{fig:scatter:optdens:gcop}.

\subsubsection{Comparison}
We compare SIS and SSIS to G\&L IS by computing estimates, standard errors and computation times for $l \in \{100, 1000, 2000, 3000, 4000\}$. All methods require a calibration stage. For this comparison, we optimize the proposal distributions at each loss level of $l$ separately and estimate the corresponding loss probability.
Table~\ref{table:res:gausscredit} shows the estimated probabilities along with half-widths of estimated confidence intervals (CI) in brackets. The last column shows the average computational time of each method over all loss levels $l$. All examples used $n=5000$ samples and $1000$ samples for the calibration.

We see that all our methods lead standard errors smaller than G\&L IS, while the estimated CIs for both methods are typically overlapping, supporting the correctness of both approaches. Given the small run-time, unbiasedness and small estimated errors, we can conclude that $\SSIS^{\mu,\sigma}$ is the best estimator for this problem. This supports our claim that the optimal density of $T$ can be quickly and accurately approximated by a location scale transform of $f_T$. Note that $\SIS^*$ and $\SSIS^*$ are particularly slow, as it involves numerically approximation the quantile function corresponding to the optimal $g_T$.

\begin{table}[ht]
	\centering
	\begin{tabular}{c|cccccc}
	  	\toprule
	l & 100 & 1000 & 2000   &3000 &4000   & Avg run-time (sec)\\
	\midrule
G\&L IS     & 0.28  &0.0079 &0.00077&9.2e-05&1.1e-05&2.45\\
	& (0.0078) & (0.00036) & (4.1e-05) & (6.3e-06) & (8.8e-07) & \\
$\SIS^*$      &0.28&  0.0081&0.00076&9.2e-05&1.1e-05&6.62\\
		& (0.0068)& (0.00021)& (2.1e-05) & (2.4e-06) & (3.5e-07)& \\
$\SSIS^*$       &0.28&  0.0082&0.00077&9.5e-05&1.1e-05&12.56\\
			& (0.0046) & (0.00014) & (1.4e-05) & (1.7e-06) & (2.5e-07) & \\
$\SIS^{\mu}$      &0.28&  0.0077&0.00074&8.6e-05&1e-05   &1.41\\
				& (0.0086) & (0.00039) & (4.2e-05) &(5.5e-06) & (6.8e-07) & \\
$\SSIS^{\mu}$   &0.28&  0.008& 0.00075&9.1e-05&1.1e-05&1.45\\
					& (0.0062) & (0.00028) & (2.9e-05) &(4e-06)&(5.1e-07) & \\
$\SIS^{\mu,\sigma}$ &0.28&  0.0082&0.00077&9.4e-05&1.1e-05&2.45\\
						&(0.0077)&(0.00034) & (3.3e-05) & (5.2e-06) &(4.6e-07) & \\
$\SSIS^{\mu,\sigma}$ &0.28&  0.0081&  0.00075&8.9e-05&1.1e-05&   2.2\\
							&(0.0059) &(2e-04) & (1.9e-05)& (2.3e-06) & (3e-07) & \\
	\bottomrule
	\end{tabular}
		\caption{Estimates and CI halfwidths when estimating $p_l$ in the Gaussian Credit Portfolio problem with $h=1000$ obligors and $d=10$ factors for various $l$
	and methods. The last column displays average run-times. }
	\label{table:res:gausscredit}
	\end{table}

	          \begin{table}[ht]
	\centering
	\begin{tabular}{c|cccccc}
	  	\toprule
	l & 100 & 1000 & 2000   &3000 &4000   \\
	\midrule
G\&L IS     & 1.5 & 1.5 & 1.5 & 1.5 & 1.6 \\
$\SIS^*$  & 1.3 & 1.3 & 1.2 & 1.7 & 1.5 \\
$\SIS^{\mu,\sigma}$ & 1.6 & 1.5 & 1.9 & 1.7 & 1.6 \\
$\SSIS^{\mu,\sigma}$ & 1 & 1.1 & 1 & 1.1 & 0.9 \\
	\bottomrule
	\end{tabular}
		\caption{Relative error reduction factors RE(MC)/RE(RQMC) for the Gaussian credit portfolio with
		$h=1000$ obligors and $d=10$ factors for various $l$
	and methods. }
	\label{table:vrf:gausscredit}
	\end{table}

\subsection{Tail probabilities of a $t$-Copula Credit Portfolio}\label{section:credit:t}
In this section, we apply SIS to a credit portfolio problem under a $t$-copula model, which is the model studied in Section~\ref{section:credit:gauss} with a multiplicative shock variable included. This $t$-copula model is a special case of the models with extremal dependence studied in \cite{bassamboojunejazeevi2008}. Unlike the Gaussian copula, the $t$-copula is able to model tail dependence of latent variables, so simultaneous defaults of many obligors are more likely under the $t$-copula model than under its Gaussian copula counterpart.

\subsubsection{Problem Formulation}
In the $t$-copula model, the latent variables $\bX=(X_1,\ldots,X_d)$ are multivariate-$t$ distributed, that is,
\begin{align*}
X_k = \sqrt{W} (a_{k1} Z_1 + \cdots + a_{kd} Z_d + b_k \eps_k),\quad k=1,\dots,h,
\end{align*}
where $W \sim\IG(\nu/2,\nu/2)$ is independent of $Z_1,\dots,Z_d, \eps_k\isim\N(0,1)$. Accordingly, we define $Y_k = \I_{\{X_k > t^{-1}_{\nu}(1-p_k)\}}$. We assume the same parameters as in Section~\eqref{section:credit:gauss:pars}, except that now we have $h=50$ obligors, and the two settings for the degrees-of freedom $\nu\in\{5, 12\}$. Let $\bZ=(Z_1,\ldots,Z_d)\T$ and $\bm{\eps}=(\eps_1,\ldots,\eps_h)$. We consider two transformations.
For the first transformation,  let $Z_W= \Phi^{-1}(F_W(W))$ and
$$T_1(W, \bZ, \bm{\epsilon})=\beta_W Z_W + \bbeta_{L}\T \bZ,$$
where $\beta_W\in\IR$ and $\bbeta_{L}\in\IR^d$ are such that $\beta^1_W+\bbeta_{L}\T\bbeta_{L}=1$. Then, $T_1 \sim \N(0,1)$ since $Z_W\sim\N(0,1)$ is independent of $\bZ$.

Our second transformation relies on the random variable $S_l(\bZ,\bm{\eps})=\P(L>l \mid \bZ,\bm{\eps})$ and note that $\P(L>l)=\E(S_l(\bZ,\bm{\eps}))$. Based on this and the fact that, given a sample $\bZ, \bm{eps}$, the function $S_l$ can be computed analytically, \cite{chankroese2010} propose to use CMC, i.e., estimating $\P(L>l)$ by the sample mean of $S_l(\bZ_i,\bm{\eps}_i)$ for independent $\bZ_i,\bm{\eps}_i$ for $i=1,\dots,n$. We propose to use this CMC idea combined with SIS by using the transformation
$$T_2=\bbeta_S\T\bZ$$
with $\bbeta_S$ such that $\bbeta_S\T\bbeta_S=1$, which implies $T_2\sim\N(0,1)$.

The second method based on CMC, is very effective as the variable $W$ which accounts for a large portion of the variance of $L$, is integrated out. Furthermore, \cite{chankroese2010} additionally employ IS on $(\bZ,\bm{\eps})$ to make the event $\{L>l\}$ more frequent using the cross-entropy method; see \cite{deboerkroesemannorrubinstein2005, rubinstein1997,rubinsteinkroese2013}. We refer to Chan and Kroese's method as C\&K CMC+IS. The numerical study in~\cite{chankroese2010} demonstrates that C\&K CMC+IS achieves substantial variance reduction. We will show in our numerical examples below that combining their CMC idea with our proposed single index IS method gives even greater variance reduction.

\subsubsection{Fit of Single-Index models with and without conditional Monte Carlo}
We first investigate whether or not $L$ and $S_l$ have single-index structures.
As before, the coefficients $\bbeta$ that maximize the fit of the single-index model are estimated using the average derivative method of Stoker \cite{stoker1986}.

\begin{figure}[htbp]
	\centering
	\includegraphics[width = 0.48\textwidth]{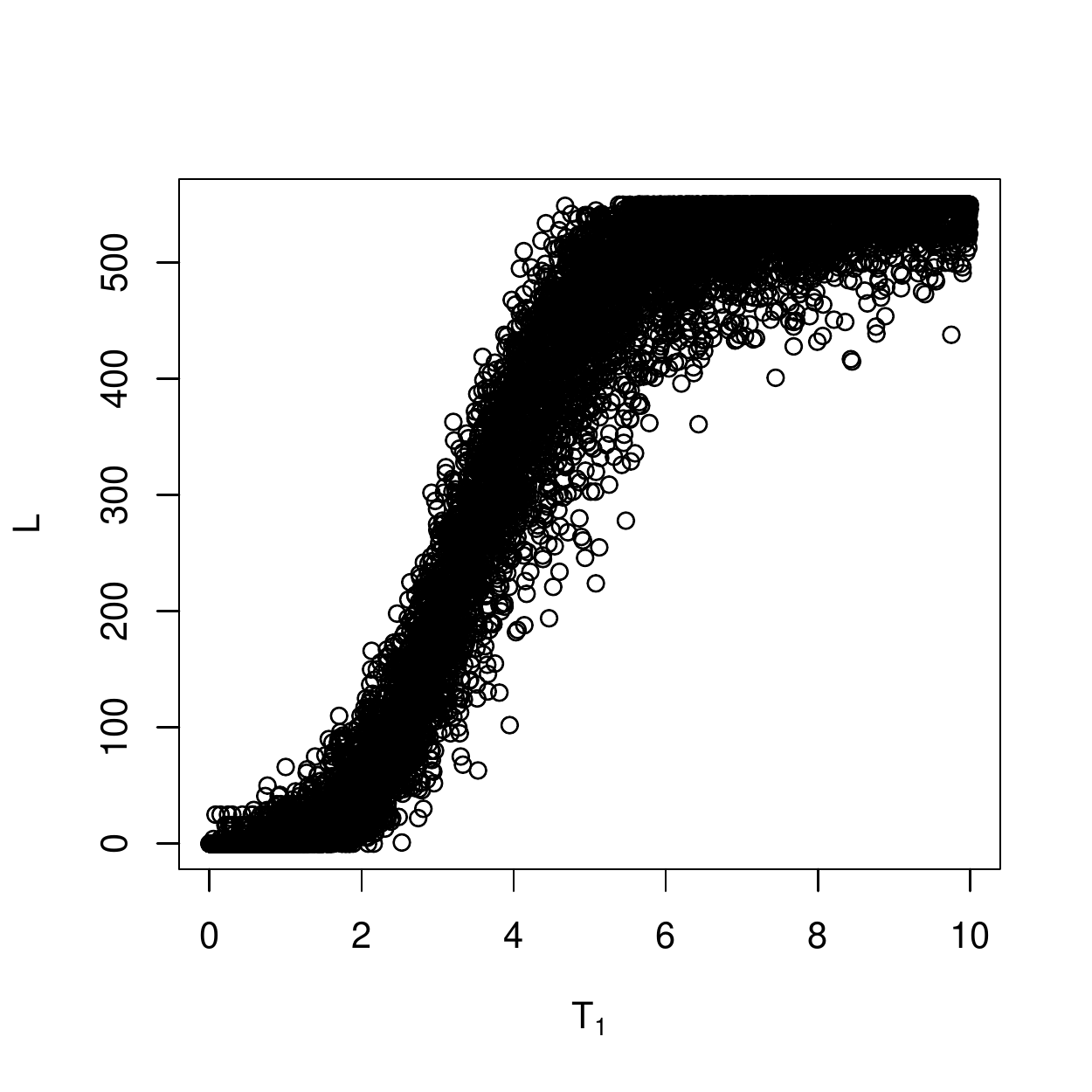}
	\includegraphics[width = 0.48\textwidth]{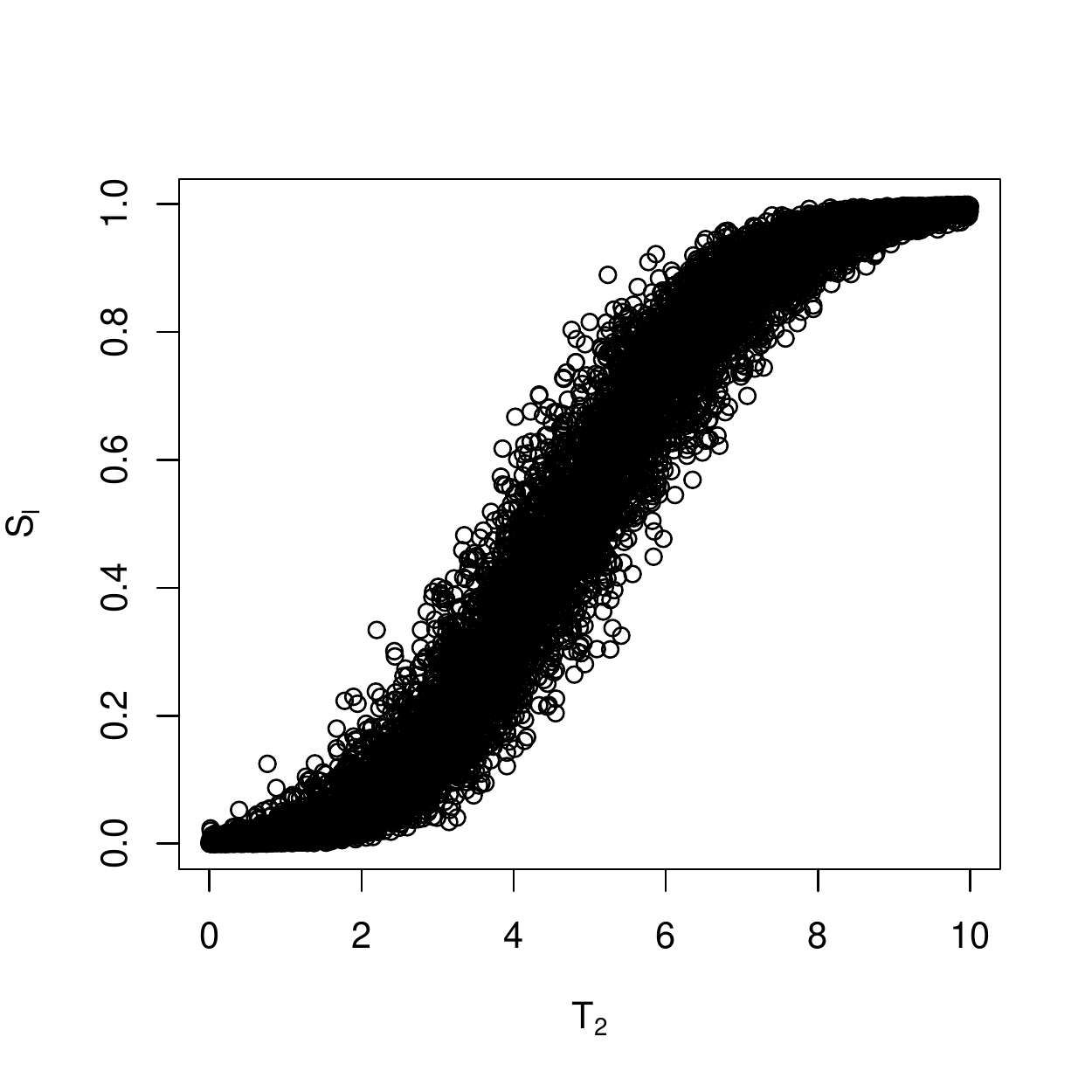}
	\includegraphics[width = 0.48\textwidth]{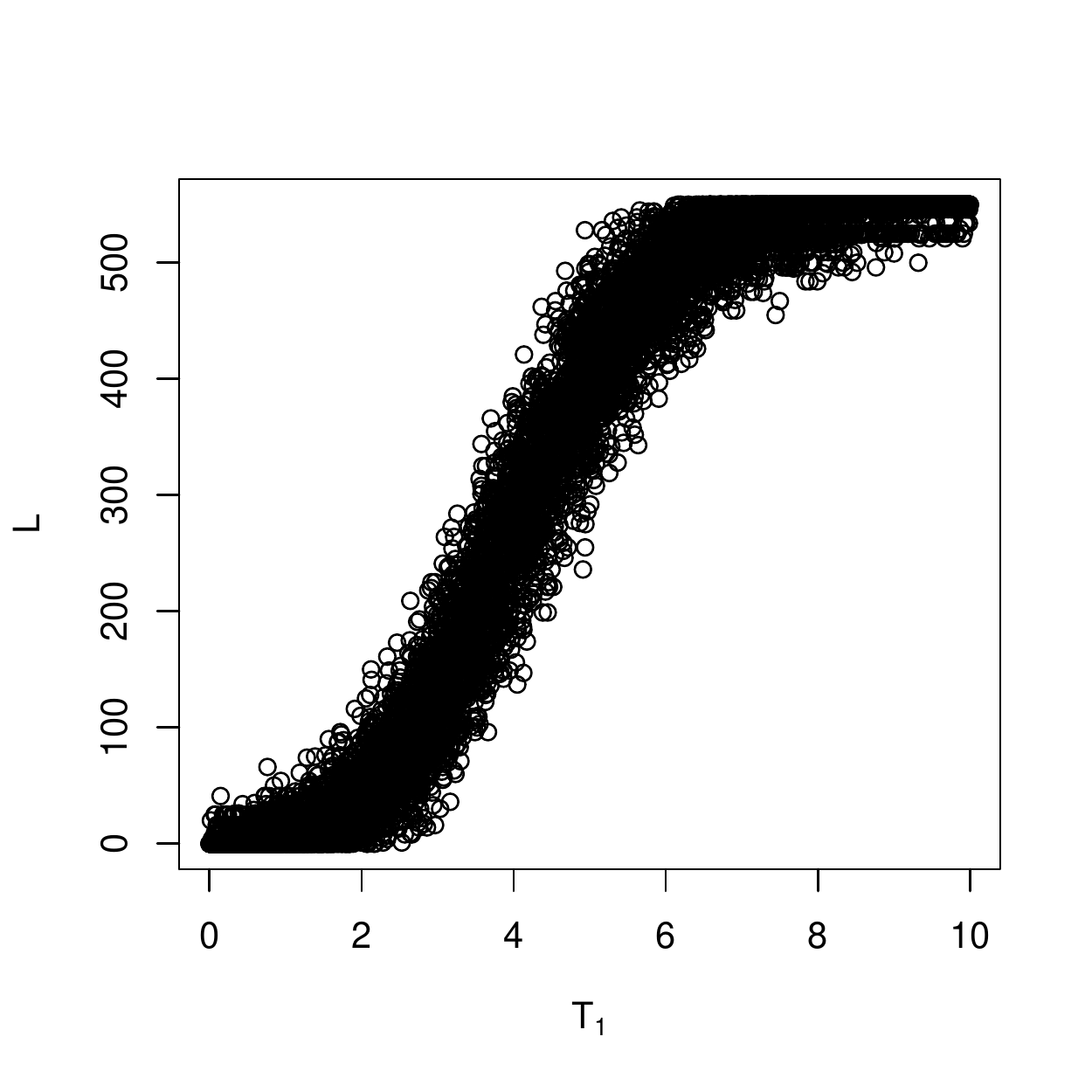}
	\includegraphics[width = 0.48\textwidth]{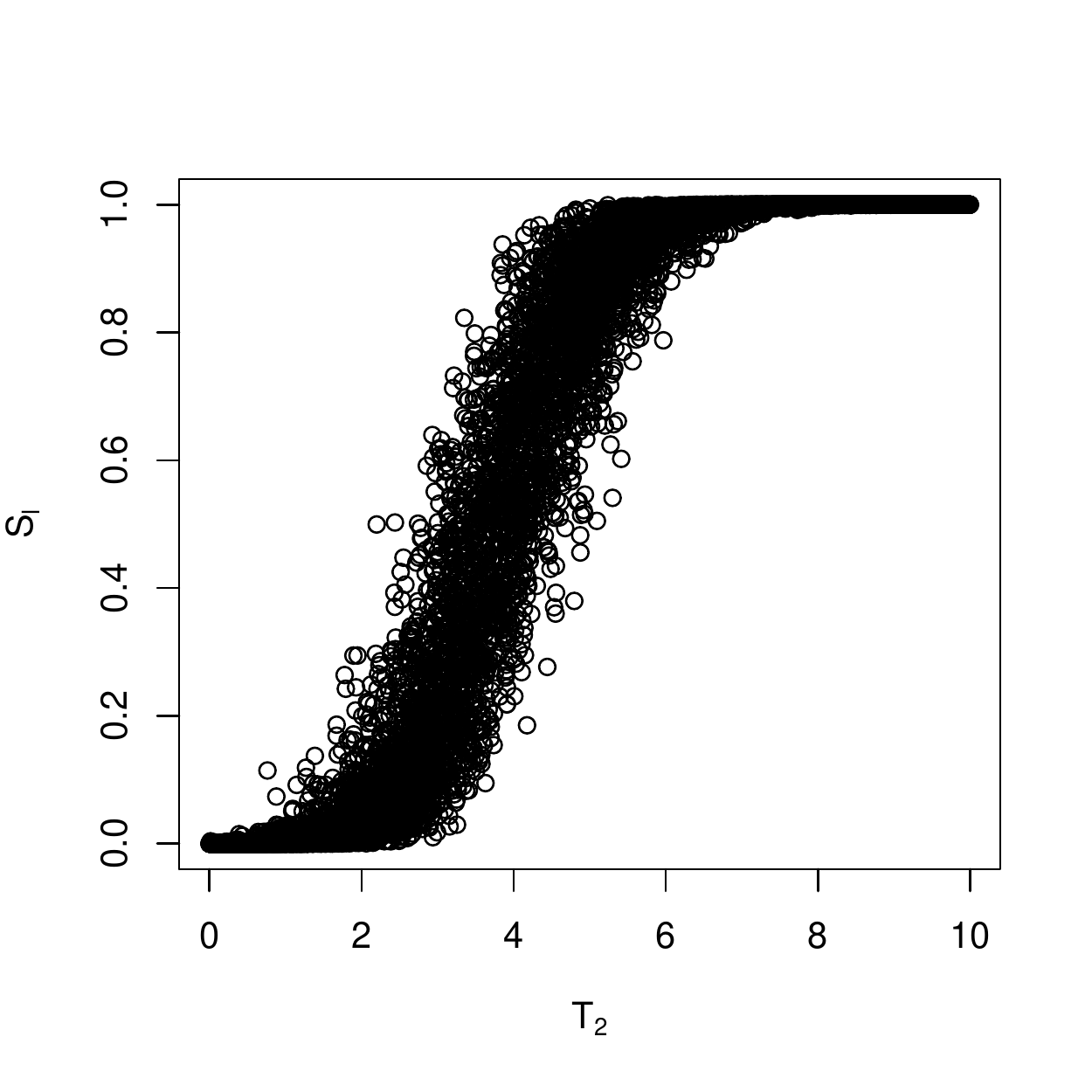}
		\caption{Scatter plots of $L$ vs $T_1$ (left) and $S_l$ vs $T_2$ (right) where $l=500$ and $\nu = 5$ (top) and $\nu = 12$ (bottom). }
	\label{fig:T_vs_L_tcredit}
\end{figure}

Figure~\ref{fig:T_vs_L_tcredit} shows scatter plots of $(T_1,L)$ and $(T_2, S_l)$ for $\nu=12$ and $\nu=5$.
The figures show that there is a strong association between $T_1$ and $L$ but the dependence is stronger when $\nu=12$ than when $\nu=4$. When $\nu=4$, there is a significant variation of $L$ that cannot be captured by the single-index model based on $T_1$ in the right-tail. This observation holds more generally; the smaller $\nu$ (i.e., the stronger the dependence between the $\bX_i$), the worse the fit of the single-index model becomes in the right-tail.
When investigating the fit of $(T_2, S_l)$, recall that the main advantage of CMC is that $W$ is integrated out; the resulting estimators should be less sensitive to the degrees-of-freedom $\nu$, which is the case in the plot. We can see that the fit of $T_2$ is excellent even in the outer right-tail for all settings of $\nu$ and $l$.

\subsubsection{Estimates and estimated variances}

\begin{figure}
\includegraphics[width = 0.48\textwidth]{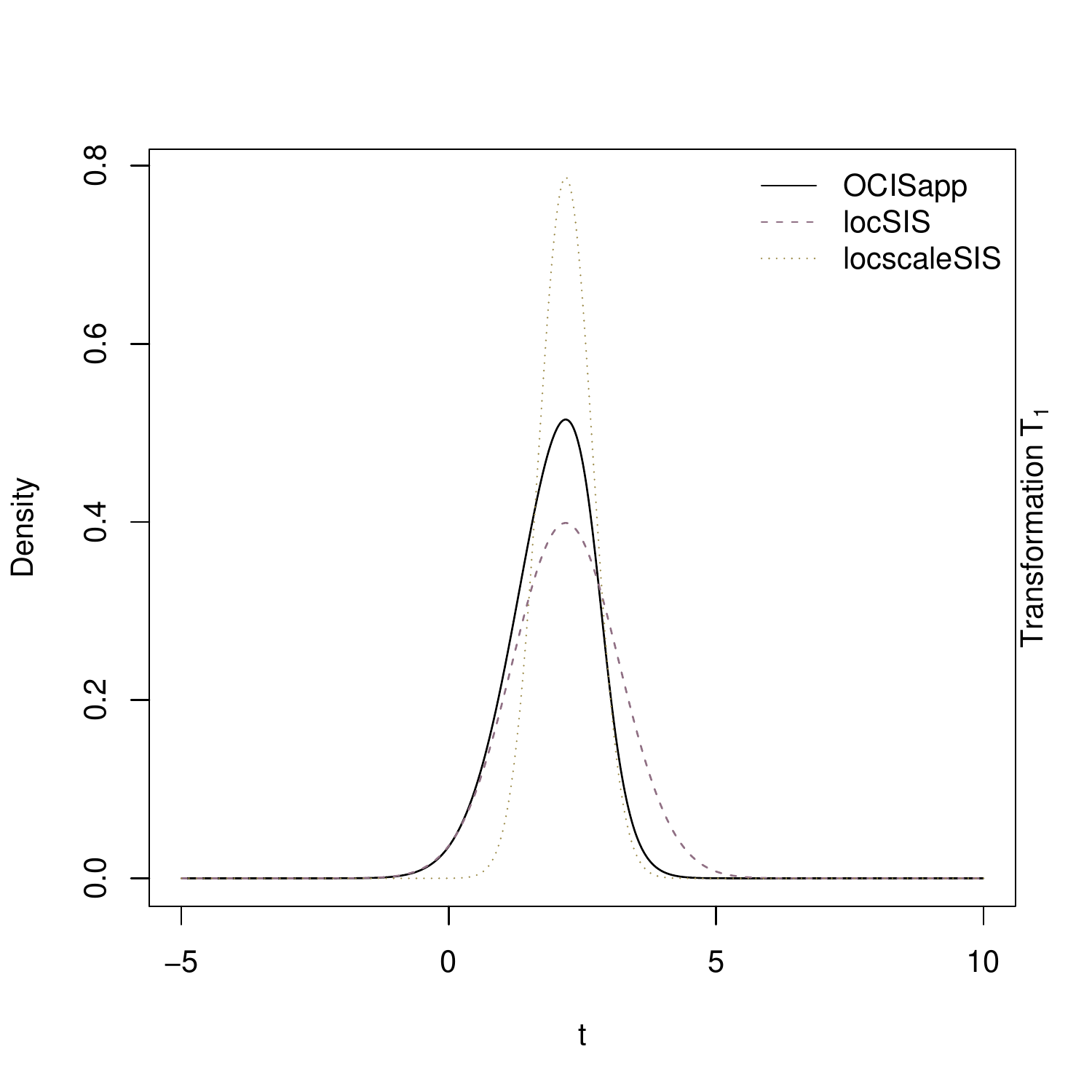}
\includegraphics[width = 0.48\textwidth]{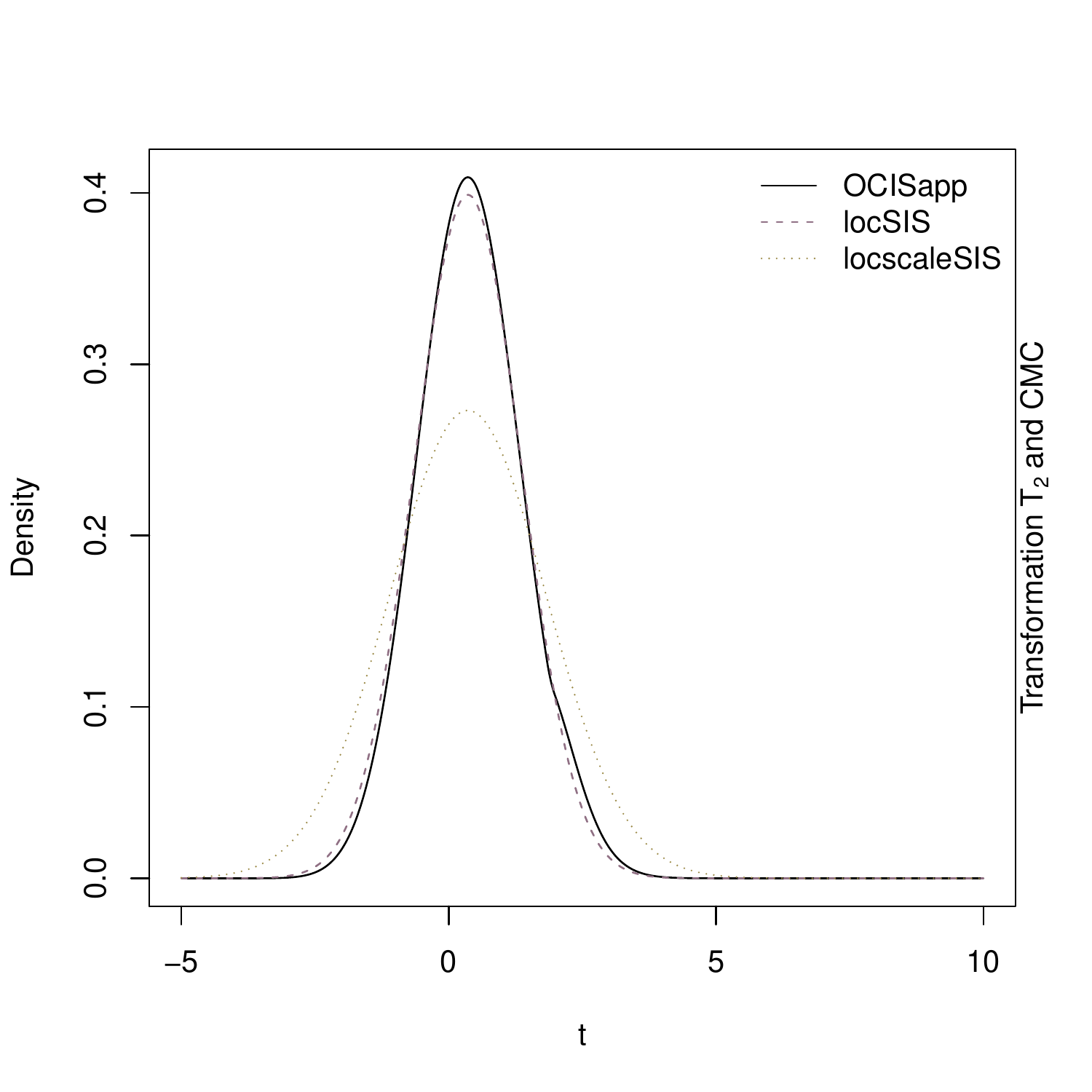}
\caption{Optimally calibrated densities for $l=100$ and the transformations $T_j$ for $j=1,2$. }
\end{figure}

\begin{figure}
\centering
\includegraphics[width = 0.45\textwidth]{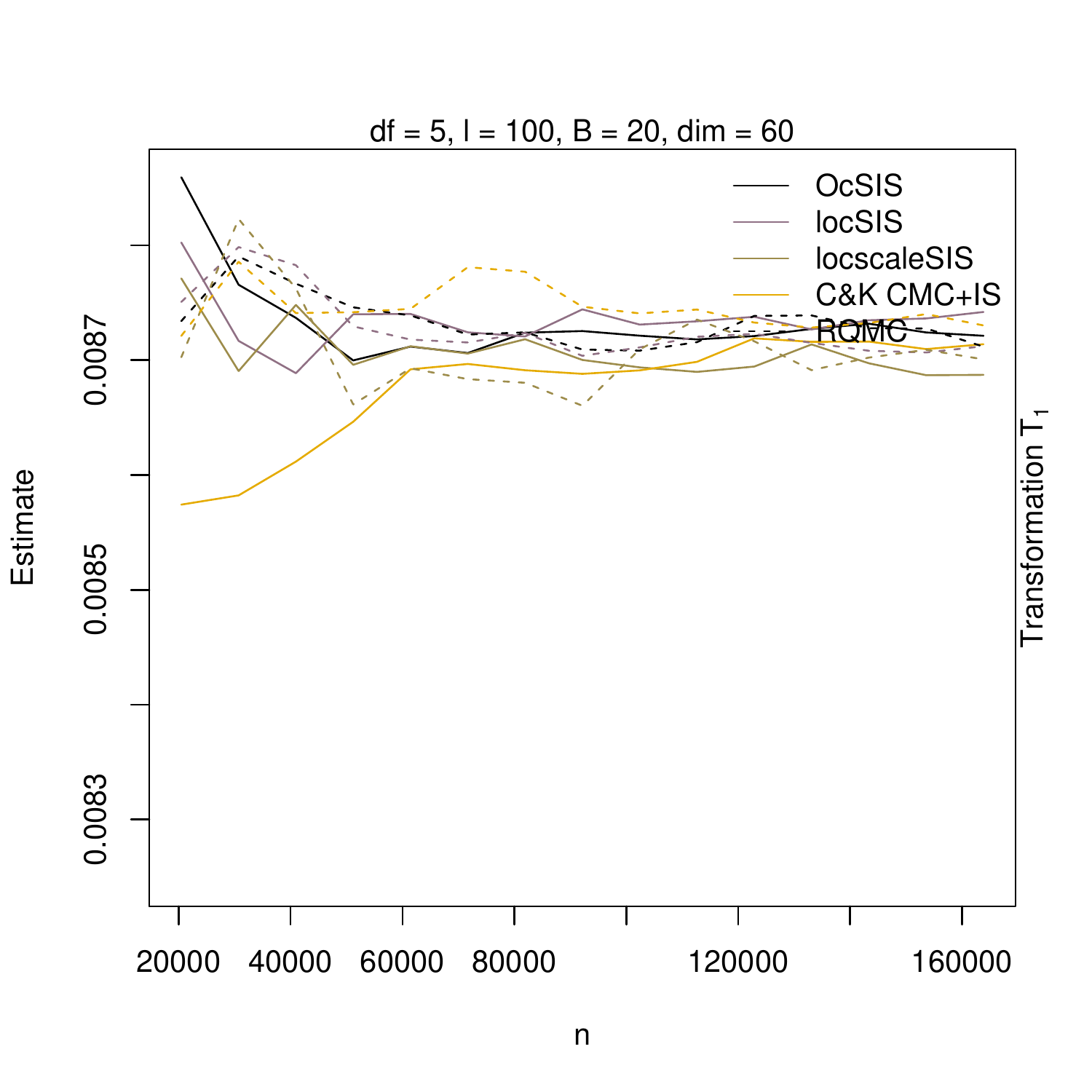}
\includegraphics[width = 0.45\textwidth]{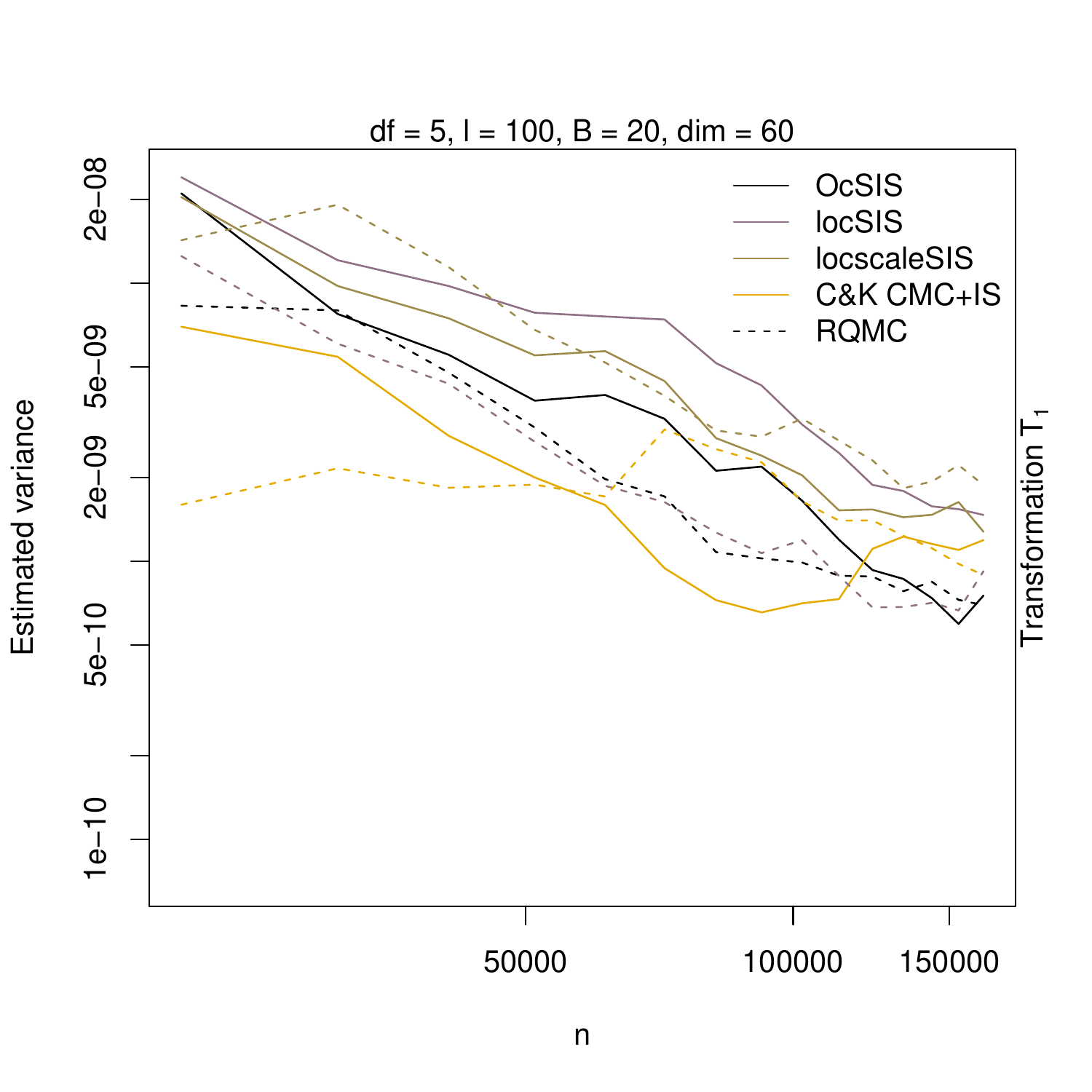}
\includegraphics[width = 0.45\textwidth]{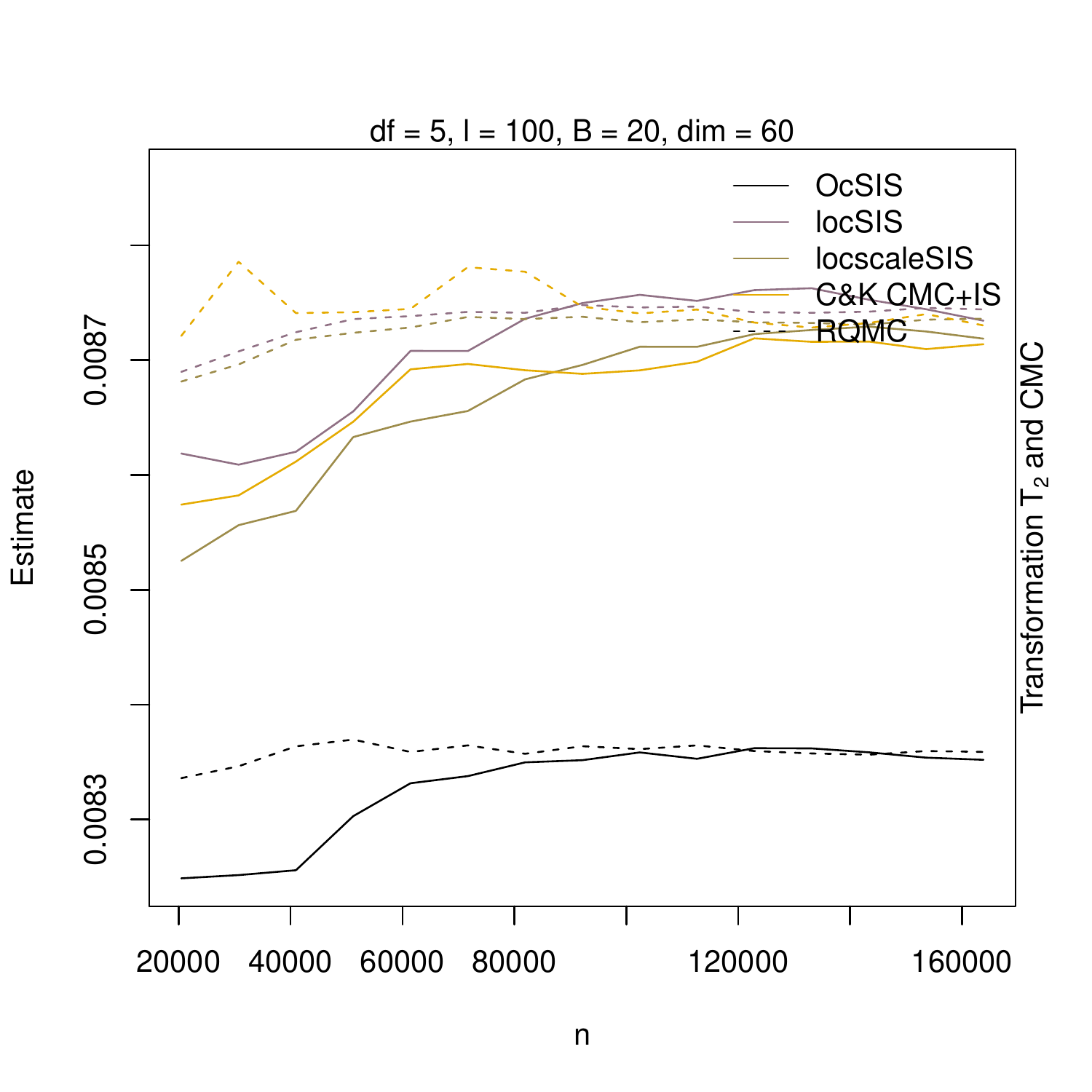}
\includegraphics[width = 0.45\textwidth]{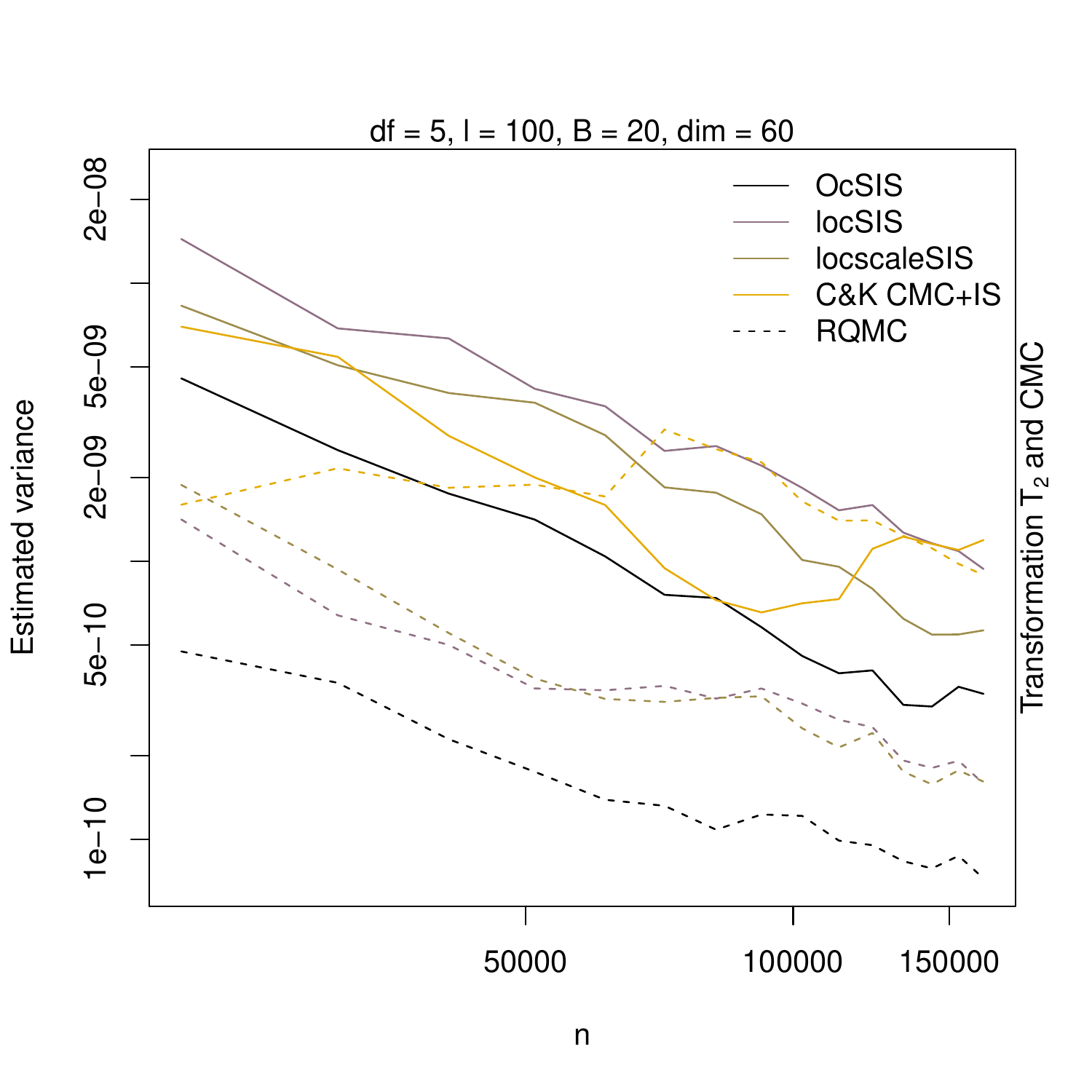}
\caption{Estimates (left) and estimated variances (right) as a function of $n$ for $\nu=5$. }
\label{fig:tcop:estvars:1}
\end{figure}

\begin{figure}
\centering
\includegraphics[width = 0.45\textwidth]{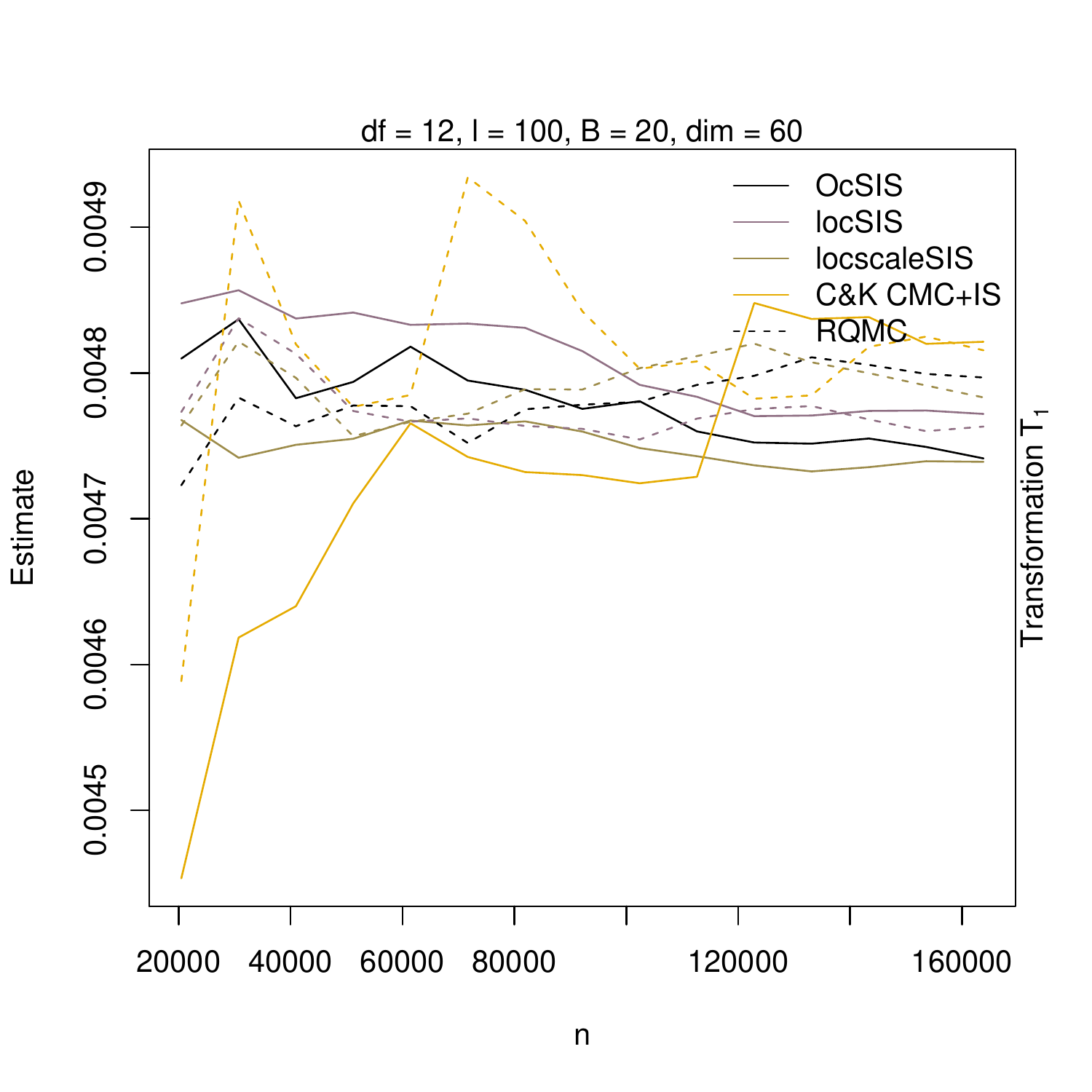}
\includegraphics[width = 0.45\textwidth]{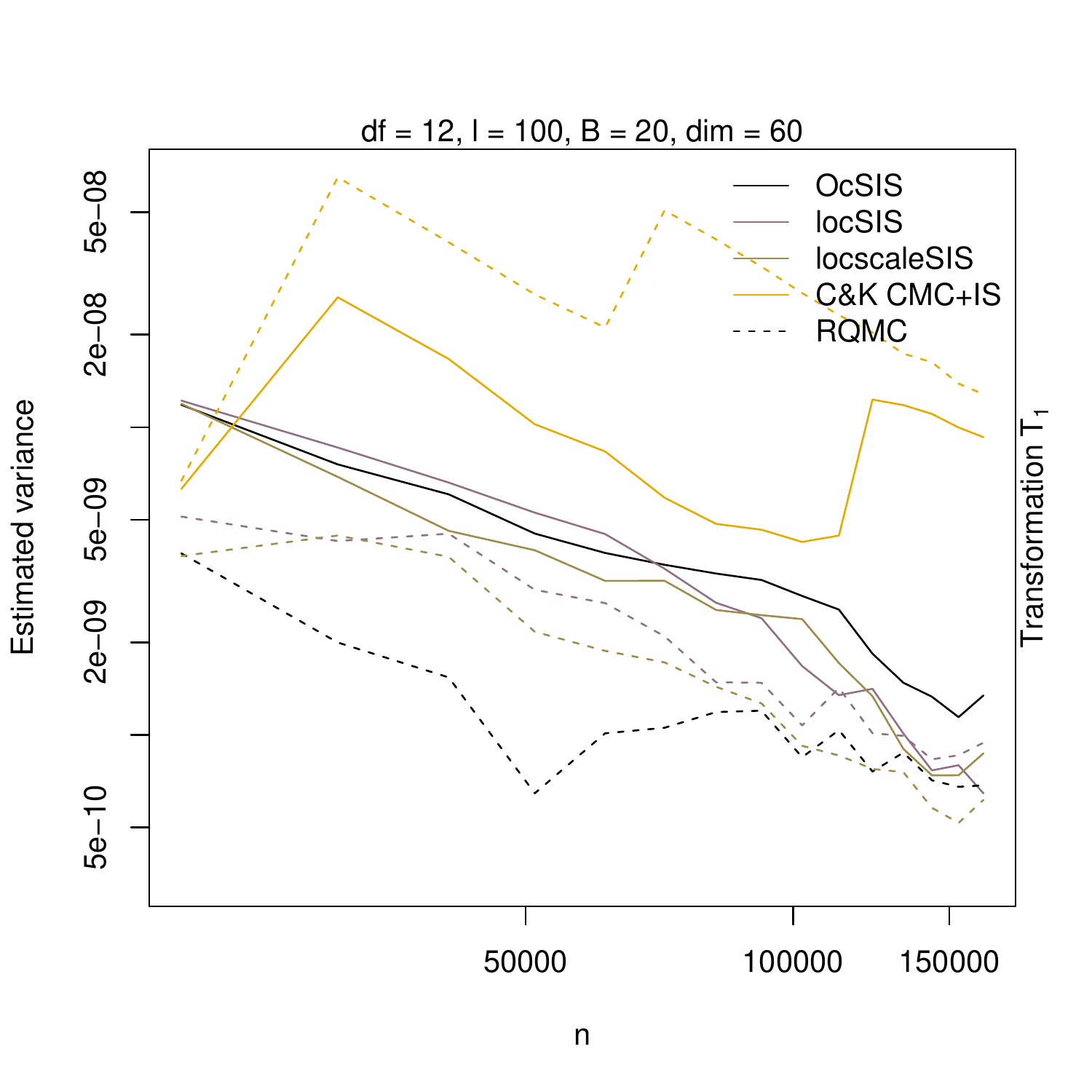}
\includegraphics[width = 0.45\textwidth]{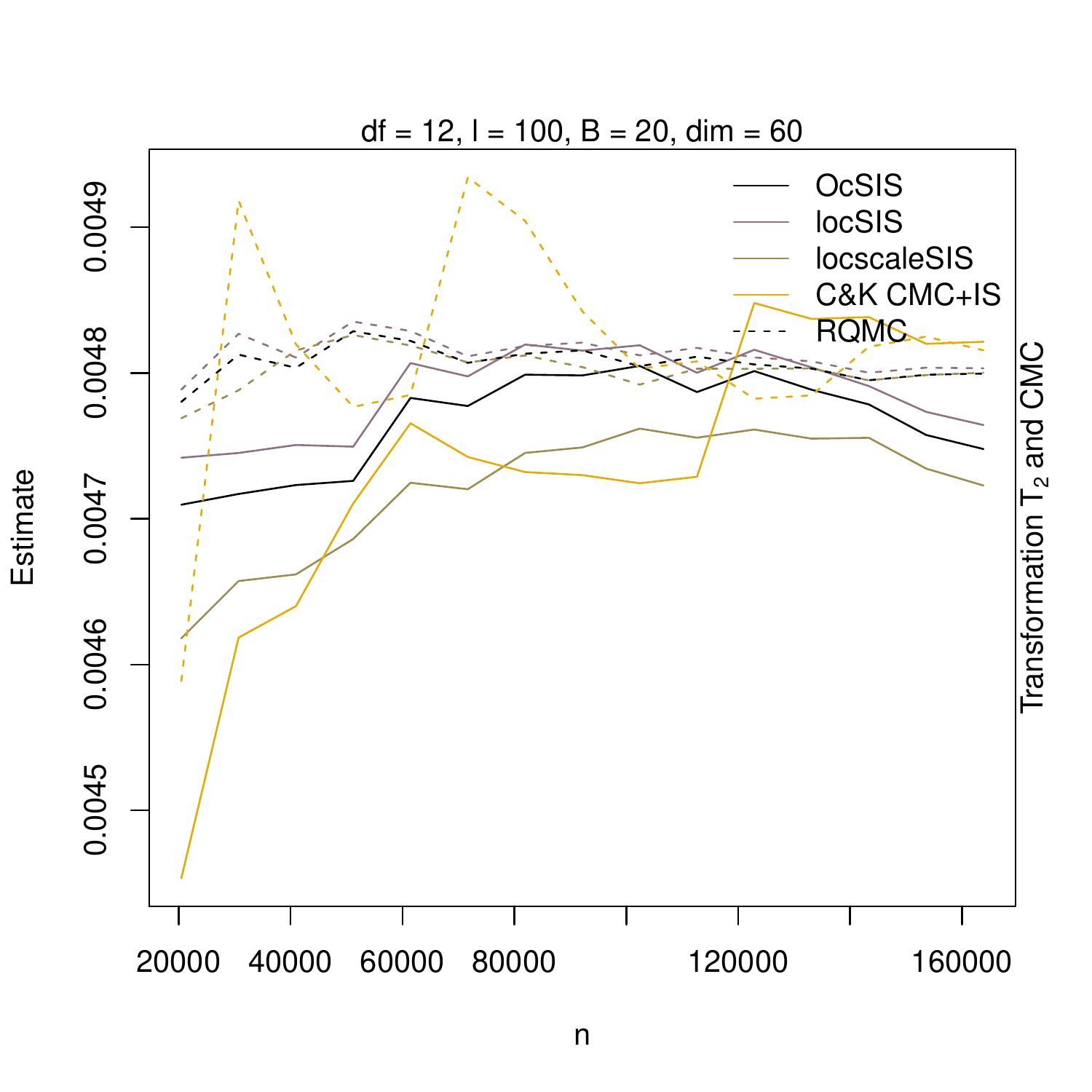}
\includegraphics[width = 0.45\textwidth]{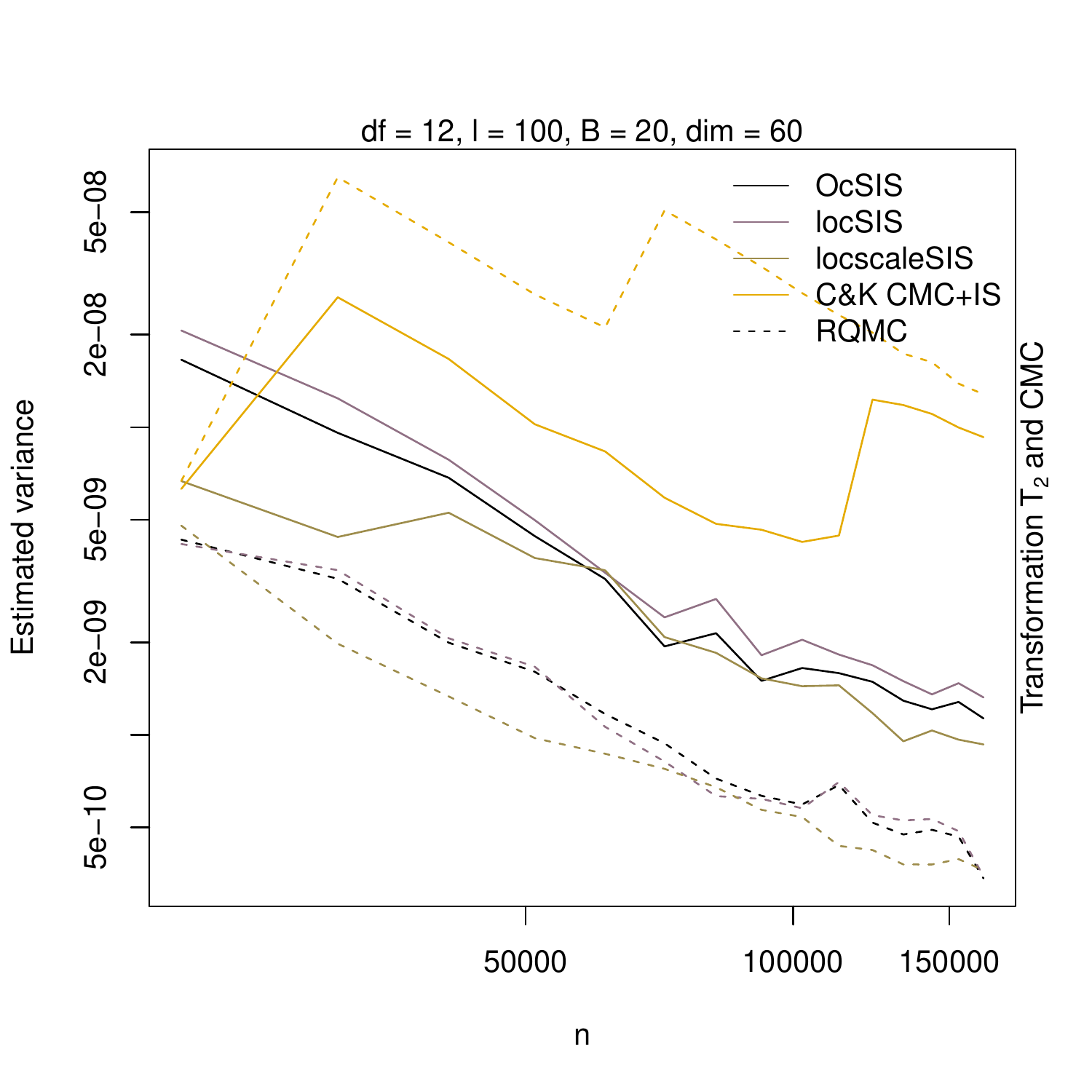}
\caption{Estimates (left) and estimated variances (right) as a function of $n$ for $\nu=12$. }
\label{fig:tcop:estvars:2}
\end{figure}

We compare the original C\&K CMC+IS from \cite{chankroese2010} with SIS with and without CMC. We additionally investigate whether employing RQMC yields a variance reduction. To this end, we estimate $p_l$ for $l=100$ for various $n$ and methods; see Figures~\ref{fig:tcop:estvars:1} and \ref{fig:tcop:estvars:2}. Variances are estimated as the sample variance of $B=20$ repetitions; this ensures that the same variance estimator (namely, the sample variance) is used for both methods, rather than using the estimator from Proposition~\ref{prop:SISvar:est} for MC and the sample variance for RQMC.

Note that for fixed $\nu$, the data for C\&K CMC+IS are identical independent of which transformation is used, so these lines can be used as reference. As expected, variances with the CMC idea are smaller than without the CMC idea. Note further that all our (S)SIS methods combined with $T_1$ (which does not integrate out $W$) give smaller variances than C\&K CMC+IS, which does integrate out $W$.

\section{Concluding Remarks}
\label{sec:conclusion}
In this paper, we developed importance sampling and stratification techniques that are designed to work well for problems with a single-index structure, i.e., where the response variable depends on input variables mostly through some one-dimensional transformation. The main theme of our approach is to exploit the low-dimensional structure of a given problem in rare-event simulation by introducing a conditional sampling step on this important transformed random variable and using optimal IS.

We derived expressions for optimal densities of said one-dimensional transformation which achieve minimum variance and discussed boundary cases with zero variance. Furthermore, we demonstrated that our framework includes and generalizes existing mean-shifting techniques.  Our theoretical framework and numerical examples suggest substantial variance reduction for problems having strong single-index structures. As the optimal density rarely belongs to a known parametric family, we also give explicit steps to calibrate the proposal distribution.

Our numerical experiments revealed that the proposed methods outperform existing methods that were specifically tailored to the Gaussian and $t$-copula credit portfolio problem. The success of our method in this framework highlights the flexibility and wide applicability of our approach.

By combining our single-index framework with RQMC methods, we achieve even more precise estimation results, thanks to the dimension reduction feature of our conditional sampling step.

Note that there exist many other low-dimensional structures studied in the literature and they may provide a better fit than single-index models do. For instance, the structure assumed by the sufficient dimension reduction can be seen as a multi-index extension of the linear single-index model; see \cite{cook1998, cookforzani2009, adrangicook2009}. We would like to develop importance sampling techniques for problems based on other low-dimensional structures in future research.

\appendix
\section*{Appendix}

\begin{proof}[Proof of Proposition~\ref{prop:IS}]
The mean and variance follow from
$$ \E_g(\hmuSISn) = E_g(\Psi(\bX) w(T)) = \E_g(m(T)w(T)) = \int_{\Omega_g} m(t) \frac{f_T(t)}{g_T(t)}g_T(t)\;\rd t=\mu_{\text{\tiny{SIS}}}$$
and
$$ n \var_g(\hmuSISn) + \mu_{\text{\tiny{SIS}}}^2 = \E_g\left(\Psi^2(\bX)w(T)\right)=\int_{\Omega_g} m^{(2)}(t)\frac{f_T^2(t)}{g_T^2(t)}\;\rd t.$$
Asymptotic normality follows from the central limit theorem. Next, we need to find $g_T$ among all $g$ that give unbiased estimators so that the variance, or equivalently $\E_g(m^{(2)}(T)w(T))$, is minimal when $\Psi(\bx)\geq 0$ or $\Psi(\bx)\leq 0$ for all $\bx\in\Omega$. Let $\Omega_{\text{\tiny{ub}}}=\{t\in\Omega_f:m(t)f_T(t)\ne 0\}$. By Jensen's inequality,
\begin{align*}
\E_g\left(m^{(2)}(T)w^2(T)\right)&\geq \left(\E_g\left(\sqrt{m^{(2)}(T)}w(T)\right)\right)^2 \\
&= \left(\int_{\Omega_g} \sqrt{m^{(2)}(t)}w(t)\;\rd t\right)^2 = \left(\int_{\Omega_f} \sqrt{m^{(2)}(t)}w(t)\;\rd t\right)^2
\end{align*}
The last inequality follows since $\hmuSISn$ is assumed to be unbiased, i.e., $\Omega_{\text{\tiny{ub}}}\subseteq \Omega_g$ and the fact that $\sqrt{m^{(2)}(t)}f_T(t)=0$ for $t\not\in\Omega_{\text{\tiny{ub}}}$ (as $m(t)=0$ implies $m^{(2)}(t)=0$ by the assumption on $\Psi$). The right hand side of the inequality is a constant independent of the choice of $g_T$, namely the minimum variance among all SIS estimators. To achieve equality, or equivalently to minimize the variance, set $g_T\propto \sqrt{m^{(2)}}(t)f_T(t)$ for $t\in\Omega_{\text{\tiny{ub}}}$ and the claim follows.
\end{proof}

\begin{proof}[Proof of Proposition~\ref{prop:SIS}]
Let $\Omega_T^{(i)}=\{t\in(t_{\inf},t_{\sup}): \lambda_{i}  \leq t < \lambda_{i+1}\}$ where $\lambda_i=G_T^\i( (i+1)/n)$ and note that $\P(T\in\Omega_T^{(i)})=1/n$ for $i=1,\dots,n$. Then
\begin{align*}
\E(\hmuSSISn)&=\frac{1}{n}\sum_{i=1}^n \E_g\left(\Psi(\bX)w(T)\mid T \in \Omega_T^{(i)}\right)=\frac{1}{n}\sum_{i=1}^n \E_g\left(\E_g(\Psi(\bX)w(T)\mid T) \mid T \in  \Omega_T^{(i)}\right)\\
&= \frac{1}{n}\sum_{i=1}^n  \int_{\lambda_i}^{\lambda_{i+1}}m(t) \frac{f_T(t)}{g_T(t)}g_T(t)\;\rd t = \mu_{\text{\tiny{SIS}}}
\end{align*}
The expression for the variance is a slight generalization of \cite[Lemma~4.1]{glassermanheidelbergershahabuddin1999} in that stratification is combined with IS, bit it can be proved similarly. Let $\eta_{n}(t)$ denote the index $i$ so that $t\in\Omega_T^{(i)}$. Then
\begin{align*}
n\Var(\hmuSSISn)=\frac{1}{n}\sum_{j=1}^n \Var_g\left(\Psi(\bX)w(T)\mid T \in\Omega_T^{(i)}\right)=\E_g\left(\Var_g\left(\Psi(\bX)w(T)\mid \eta_n(T)\right)\right).
\end{align*}
Let $\xi=\E_g(\Psi(\bX)w(T)\mid T)=m(T)w(T)$ and define the sequence $\xi_n=\E_g(\xi\mid \eta_n(T))$. Note that the $\sigma-$algebra generated by $\eta_n(T)$ forms an increasing family as $n$ increases through a constant multiple of power two. Observe that $\E_g(|\xi|)<\infty$ and $\sup_n \xi_n < \E_g(\Psi^(\bX)w^2(T))=\E_g(m^{(2)}(T)w^2(T))<\infty$. Also, $\xi_n$ is a martingale if $n$ increases through a constant multiple of powers of two as it is a Doob's martingale; see \cite[p.~246]{karlintaylor1975}. Then using the arguments as in \cite[Lemma~4.1]{glassermanheidelbergershahabuddin1999}, it follows that $\Var_g(\hmuSSISn)=\sigSISsq/n+o(1)$.

The expression for the optimal density and variance expressions follow as in the proof of Prop.~\ref{prop:IS} by applying Jensen's inequality. It remains to show that the SSIS estimator is asymptotically normal, which we show by applying the Lyapunov Central Theorem; see \cite[p.~134]{kolekoedijkverbeek2007}. Let $m_i = \E_g(\Psi(\bX)w(T)\mid T\in\Omega_T^{(i)})$ and $v_i^2=\Var_g(\Psi(\bX)w(T)\mid T\in \Omega_T^{(i)})$. It is easily seen that $(1/n)\sum_{i=1}^n m_i=\muSIS$ and $(1/n)\sum_{i=1}^n v_i^2=\sigSISsq+o(1)$. For any $i=1,\dots,n$, we have
\begin{align*}
&\E_g\left( |\Psi(\bX_i)w(T_i)-m_i|^{2+\delta}\right)\leq 2^{2+\delta}\left( \E_g\left(|\Psi(\bX_i)w(T_i)|^{2+\delta}\right)+\E_g\left(|m_i|^{2+\delta}\right)\right)\\
&= 2^{2+\delta} \left(\E_g\left( |\Psi(\bX)w(T)|^{2+\delta} \mid T \in \Omega_T^{(i)}\right)+ \E_g\left( | \E_g(\Psi(\bX)w(T)\mid T\in \Omega_T^{(i)}) |^{2+\delta}\right)\right)\\
&\leq 2^{2+\delta} \left(\E_g\left( |\Psi(\bX)w(T)|^{2+\delta} \mid T \in \Omega_T^{(i)}\right)+ \E_g\left( \E_g(|\Psi(\bX)w(T)|^{2+\delta}\mid T\in \Omega_T^{(i)}) \right)\right)\\
&=2^{3+\delta}\E_g\left( |\Psi(\bX)w(T)|^{2+\delta}\mid T\in\Omega_T^{(i)}\right),
\end{align*}
where the first inequality follows from the $c_{\tau}$ inequality as in \cite[p.~155]{loeve1963}. The Lyapunov condition is satisfied, since
\begin{align*}
&\frac{1}{(\sum_{i=1}^n\sigma_i^2)^{1+\delta/2}} \sum_{i=1}^n\E_g\left( |\Psi(\bX_i)w(T_i)-m_i|^{2+\delta}\right)\\
&\leq \frac{2^{3+\delta}}{(\sum_{i=1}^n\sigma_i^2)^{1+\delta/2}} \sum_{i=1}^n\E_g\left( |\Psi(\bX_i)w(T_i)|^{2+\delta}\mid T\in\Omega_T^{(i)}\right)\\
&= \frac{2^{3+\delta} n}{(n\sigSSISsq+o(n))^{1+\delta/2}} \E_g\left(|\Psi(\bX)w(T)|^{2+\delta}\right)\rightarrow0,\quad n\rightarrow\infty,
\end{align*}
by the assumption. The Lyapunov Central Limit Theorem together with Slutsky's Theorem implies $(\hmuSSISn-\muSIS)/\sqrt{n}\darrow\N(0,\sigSSISsq)$.
\end{proof}

\begin{proof}[Proof of Proposition~\ref{prop:SISvar:est}]
Recall that $T_i$ satisfies $T_i = G_T^\i( (i+U_i-1)/n)$ where $U_i\isim \U(0,1)$ for $i=1,\dots,n$, and are therefore ordered, i.e., $T_1<T_2<\dots<T_n$. For any $i=1,\dots,n$,
$$T_{i+1}-T_i=(G_T^{-1})'(\xi_i)\left(\frac{1+U_{i+1}-U_i}{n}\right)=\frac{1}{g_T(G_T^{-1}(\xi_i))}\left(\frac{1+U_{i+1}-U_i}{n}\right)=\mathcal{O}(1/n),$$
for some $\xi_i\in (T_i, T_{i+1})$, which implies that for any continuously differentiable function $h$, $h(T_{i+1})=h(T_i)+\mathcal{O}(1/n)$. Then we have
\begin{align*}
r_i^2&=\left( m(T_{i+1})+\eps_{T_{i+1}} - m(T_i) - \eps_{T_i}\right)^2\\
&= \left( m(T_{i+1})-m(T_i)\right)^2 + \left(\eps_{T_{i+1}}-\eps_{T_i}\right)^2-2(m(T_{i+1}) - m(T_i))(\eps_{T_{i+1}}-\eps_{T_i})\\
&= (\eps_{T_{i+1}}-\eps_{T_i})^2 - 2(m(T_{i+1}) - m(T_i))(\eps_{T_{i+1}}-\eps_{T_i})+\mathcal{O}(1/n^2),
\end{align*}
and so
\begin{align*}
\E_g\left(r_i^2 w^2(T_i)\right)&= \E_g\left(\E_g(r_i^2w(T_i)\mid T_i, T_{i+1})\right)=\E_g\left(w^2(T_i)(v^2(T_i)+v^2(T_{i+1}))\right)+\mathcal{O}(1/n^2)\\
&= 2 \E_g\left(w^2(T_i)v^2(T_i)\right)+\mathcal{O}(1/n),
\end{align*}
which means that
\begin{align*}
\E_g(\hsigSSISsq) &= \frac{1}{2(n-1)}\sum_{i=1}^n \E_g(r_i^2w^2(T_i))=\frac{1}{n}\sum_{i=1}^n\E_g\left(v^2(T)w^2(T)\mid T\in\Omega_T^{(i)}\right)+\mathcal{O}(1/n)\\
&= \E_g(v^2(T)w^2(T)) + \mathcal{O}(1/n) = \sigSSISsq + \mathcal{O}(1/n)\rightarrow \sigSSISsq,
\end{align*}
which shows consistency.
\end{proof}

\begin{proof}[Proof of Proposition~\ref{prop:meanshift}]
We use that $(\bX \mid T = t) \sim \N_d(\bbeta t,I_d-\bbeta\bbeta\T)$(see \cite[Theorem 1]{harrishelvig1965}) to compute the moment generating function of $\bX$. For $\ba\in\IR^d$,
\begin{align*}
\phantom{\E_g(}&\E_g(\exp(\ba\T\bX)) = \E_g \left(\E(\exp(\ba\T\bX) \mid T)\right) =
\E_g\left(\exp\left(\ba\T\bbeta T + \frac{1}{2} \ba\T(I_d-\bbeta\bbeta\T)\ba\right)\right) \\
&=\E_g(\exp(\ba\T\bbeta T ))\exp\left(\frac{1}{2} \ba\T(I_d-\bbeta\bbeta\T)\ba)\right)=\exp\left(c\ba\T\bbeta+\frac{1}{2}(\ba\T\bbeta)^2\sigma^2\right) \times\\
&\times \exp\left(\frac{1}{2} \ba\T(I_d-\bbeta\bbeta\T)\ba)\right) = \exp\left(\ba\T(c\bbeta) + \frac{1}{2}\ba\T (I_d+(\sigma^2-1)\bbeta\bbeta\T\ba\right).
\end{align*}
By uniqueness of the moment generating function, $\bX\sim \N_d(c\bbeta, I_d+(\sigma^2-1)\bbeta\bbeta\T)$.
\end{proof}

\printbibliography[heading=bibintoc]
\end{document}
